\newtheorem{theorem}{Theorem}
\newtheorem{lemma}{Lemma}
\newtheorem{definition}{Definition}
\newtheorem{proposition}{Proposition}
\newtheorem{corollary}{Corollary}
\begin{document}
\newfloat{alg}{H}{lop}
\floatname{alg}{Algorithm}

\title{Fair Allocation Without Trade \\ Draft - Comment are welcome}

\author{Avital Gutman\thanks{Supported by a grant from the Israeli Science Foundation (ISF), and by the Google Inter-
university center for Electronic Markets and Auctions.} \and Noam Nisan$^*$ }
%\institute{The Hebrew University of Jerusalem, Israel \\
%\mails}

\date{}
%\author{Avital Gutman  and Noam Nisan}
%       \affaddr{The Hebrew University of Jerusalem}\\
%       \email{gutmant@cs.huji.ac.il}
%\and
%\author{Noam Nisan \\
%       \affaddr{The Hebrew University of Jerusalem}\\
%       \email{noam@cs.huji.ac.il}
%}

\maketitle

\begin{abstract}

We consider the age-old problem of allocating items among different agents in a way that is efficient and fair. Two papers, by Dolev et al. and Ghodsi et al., have recently studied this problem in the context of computer systems. Both papers had similar models for agent preferences, but advocated different notions of fairness.
We formalize both fairness notions in economic terms, extending them to apply to a larger family of utilities.
Noting that in settings with such utilities efficiency is easily achieved in multiple ways, we study notions of fairness as criteria for choosing between different efficient allocations.
Our technical results are algorithms for finding fair allocations corresponding to two fairness notions:
Regarding the notion suggested by Ghodsi et al., we present a polynomial-time algorithm that computes an allocation for a general class of fairness notions, in which their notion is included. For the other, suggested by Dolev et al., we show that a competitive market equilibrium achieves the desired notion of fairness, thereby obtaining a polynomial-time algorithm that computes such a fair allocation and solving the main open problem raised by Dolev et al.
\end{abstract}

\section{Introduction}
This paper deals with the classic question of allocating resources among different potential agents.
Specifically, we are interested in this question in the context of computer systems that need to
share their computational resources among different agents.
Resources in this context can be CPU time, main memory, disk space, communication links, etc.
The agents may be jobs, computers, or software agents representing them, and the allocation may be
implemented at the level of the network routers, the operating system, or by higher level software, whether centralized or distributed.

The departure point of this work is several recent attempts to look at the allocation problem in computer systems in abstract principled terms; by Dolev et al.\cite{NJC} and Ghodsi et al.\cite{DRF}.  In these papers, the basic model assumed that each agent desires a well-defined bundle of resources, and the allocation problem is to decide which fraction of his bundle each agent gets.
While the treatment is abstract, it is very clear that the motivation came directly from actual computer systems.
We wish to explicitly point out a key difference between the literature on allocating resources in computer systems
and the general economic literature on resource allocation, a difference we believe explains the near complete separation between the two:
The computing literature almost always assumes that each agent desires a well-defined bundle of resources, while the economic literature almost always considers the {\em trade off} that agents have between different resources.

The standard example of resource allocation by an operating system has each job requesting a well-specified set of resources
(e.g. 1000 CPUs with 1TB of main memory and 1GB/sec of communication bandwidth), and allocates among such requests.
We do not often see systems that can handle requests like ``either 1000 CPUs with 2TB main memory or 2000 CPUs with 1TB main memory''.
In fact, even when the underlying problem allows a trade-off between several possible bundles of resources, %in almost all levels of allocation in computerized systems
the allocation system usually first decides on a bundle for each agent and then attempts to allocate these chosen bundles to all agents.
An example is routing in a network, where the routing decision of choosing a path to the destination is in practice completely
decoupled from the bandwidth-allocation decision for the links on the chosen path.

On the other hand, the economics literature on resource allocation usually focuses on the trade-offs between different resources that are captured by
agent preferences. The fact that for some agent an apple may be a substitute to an orange results in a flexibility in preferences that allows sophisticated trade that can be beneficial to all parties (Figure \ref{fig:convind}).

The case where consumers' preferences do not allow any substitution between different goods is called the case of ``perfect complements'' (Figure \ref{fig:pcind}),
with the basic example being Leontief utilities: For any quantity vector $(x_1,...,x_m)$ of $m$ resources, $u$ is defined as $u(x_1...x_m) \! = \! min_j (x_j/r_j)$, where the $r_j$s are the relative proportions needed of the different goods \cite{AGT6}.
These are the utilities used (implicitly) by \cite{NJC,DRF} to capture the preferences in computer systems, and are the focus of attention of this paper.
However, the relations between resources are not necessary linear. It could be that some agents' demand of bandwidth would be in quadratic relation to number of CPUs, and that RAM would be proportional to $\log(DISK)$. The generalization to perfectly complementary functions is required to capture such relations.
\begin{figure} [h!]
\centering
 \subfloat[]{\label{fig:convind}\includegraphics[width = 0.4\textwidth]{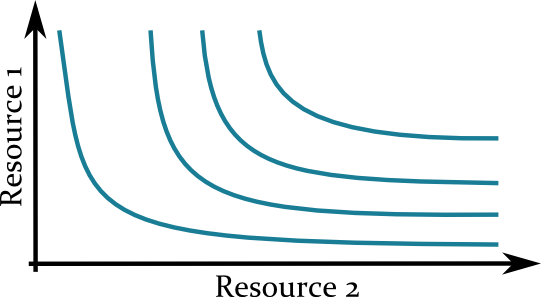}}
  \subfloat[]{\label{fig:pcind}\includegraphics[width = 0.4\textwidth]{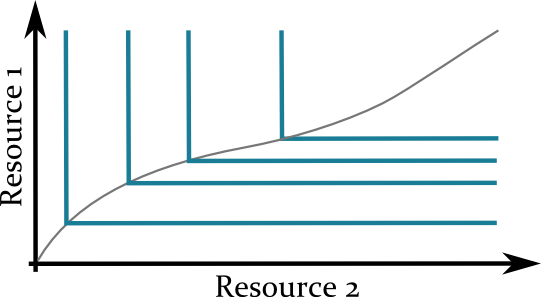}}

\caption{Figure \ref{fig:convind} shows an indifference map for a utility function with some degree of substitution between goods.
Figure \ref{fig:pcind} shows the indifference map of a perfectly complementary utility. Each indifference curve is the set of all allocations resulting in the same utility.}
\label{fig:indif}
\end{figure}

It is important to state that throughout this work, we do not assume or make any \emph{interpersonal comparisons of utility} \cite{interpersonal} - we use utility functions solely as a way to define agents' preferences over bundles.

Our first contribution in this paper is putting the question of allocation of computer resources studied in \cite{NJC,DRF} into an economic framework,
obtaining a general economic perspective that we believe is useful.
We observe that when agents have perfectly complementary preferences, the requirement of Pareto-efficiency turns out to be quite weak and simple:
it suffices that the allocation is (what we term) {\em non-wasteful}: no agent receives resources that he has no use for, and no useful resources are left on the table.
As a result, efficiency does not require any form of trade between agents, and there is no need for ``money'' as a mechanism of ensuring efficiency.
This underlying lack of trade may explain the
applicability to computer systems, which usually lack the infrastructure for enabling trade.
It may also suggest potential applicability in other scenarios where trade is impossible due to technical, administrative, legal, or ethical reasons.

However, this requirement alone leaves us with multiple possible allocations (as seen in Figure \ref{fig:edgpc} as opposed to Figure \ref{fig:edgconv}).
Therefore, the question of choosing between the efficient allocations becomes the central one, as indeed was done by \cite{NJC,DRF} in different ways, using different terminology and definitions of fairness.

\begin{figure} [h!]
\centering
\subfloat[With Substitutes]{\label{fig:edgconv}\includegraphics[width = 0.4\textwidth]{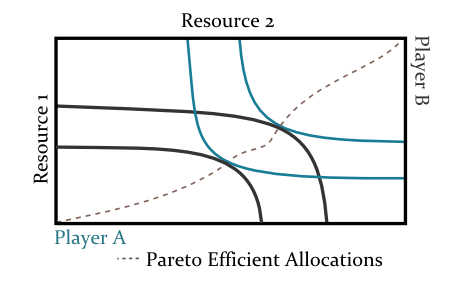}}
\subfloat[Perfect Complements]{\label{fig:edgpc}\includegraphics[width = 0.4\textwidth]{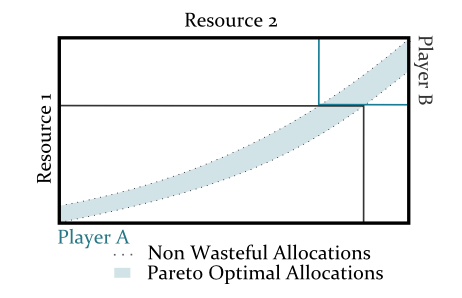}}
\caption{Figure \ref{fig:edgconv} is an Edgeworth box of continuous indifference curves with substitutes. In this case, if two curves are tangent to each other,
there is a single point of tangency. As a result, the Pareto set is single dimensional. \\
Figure \ref{fig:edgpc} is an Edgeworth box with curves corresponding to the perfectly complementary case.
Here, two curves may have many points of tangency, corresponding to allocations of the under-demanded goods.
As a result, the Pareto efficient allocations are in the whole shaded area.}

\end{figure}

After formalizing our notions, which generalize those in \cite{NJC,DRF} to the framework of perfectly complementary utilities,
we embark on the study of these fairness notions ``Bottleneck Based Fairness'' (BBF), advocated by \cite{NJC}, and ``Dominant Resource Fairness'' (DRF), used as the fairness property of the mechanism in \cite{DRF}.

Behind BBF there is the notion that when each agent is entitled to some percent of all resources, he has a ``justified complaint'' against an allocation if he gets less than that percentage on  all ``bottleneck'' resources.
An allocation is called BBF if no agent has a justified complaint against it.  Regarding DRF, we view the notion as composed of two
elements: The first element is a norm-like mechanism that quantifies and allows comparing bundles of resources allocated to a single
agent.
The second element dictates that fairness is defined according to
Rawlsian \cite{rawls} max-min fairness of these quantities.  Meaning, fairness should be decided by the agent that received the least according to the quantification of bundles.
DRF uses the $L_\infty$ norm (i.e. looking at the highest allocated share over all resources); we note any norm is equally viable. This general norm-like mechanism defines a class of fairness notions that we call ``Generalized Resource Fairness'' (GRF) (see examples in Figure \ref{fig:grf}). A somewhat related generalization of DRF can be found in \cite{egalitarian}.

At this point come our main technical results - both quite simple given the wider context which we have built.
We present two algorithms for finding fair allocations according to each of these fairness notions.
Our first algorithm finds an allocation satisfying any given fairness notion from the GRF class. It is similar, in concept, to the family of ``water-filling'' algorithms \cite{thomasCover,datanet}.
We extend the observation of \cite{DRF} that greedily allocating the resources to the
``poorest'' agent at each stage produces a fair allocation. In \cite{DRF} this was done in a pseudo-polynomial way
(allocating $\delta$-by-$\delta$ fractions of the goods\footnote{\cite{DRF} also suggest a polynomial-time algorithm which provides DRF for a discrete setup, a case which we do not study here.}) for DRF and Leontief utilities.
We give an algorithm which is polynomial for a wide subclass of perfectly complementary utilities and any GRF notion of fairness (given
minimal access to an oracle describing the utilities and desired fairness-norm), and becomes strongly polynomial for Leontief utilities.

Our second algorithmic result solves the main problem left open
by \cite{NJC}: obtaining a polynomial-time algorithm for computing a BBF allocation.
Our contribution here is a direct corollary of phrasing the problem in a market context, in which each agent enters the market with some quantity of each good (his endowment), or with a budget (some amount of money).
We recast the entitlements of \cite{NJC} in terms of budgets in a Fisher market\footnote{In terms of an Arrow-Debreu market \cite{ArrowDebreu},
this is the scenario in which agents have equal endowments of all goods.} (see e.g. \cite{AGT6}), and look at competitive market equilibria of this market.
Our main observation is then that any market equilibrium in this context will be BBF.  This both proves the existence of a BBF allocation for all perfectly complementary utilities, and solves the main problem of \cite{NJC} for Leontief utilities,
for which an equilibrium in a Fisher market can be found in polynomial-time using convex programming \cite{CV04}.
Note that Ghodsi et al. already compared their results with a market equilibrium.

The structure of the rest of the paper is as follows: Section 2 sets our notions, notations, and basic facts about efficiency with
perfectly complementary preferences.  In Section 3 we define and discuss various notions of fairness of allocation in this context,
in Section 4 we present the algorithm for GRF fairness and in Section 5 we develop the market context which implies the algorithm for BBF.

\section{The Model}

\subsection{Some Notation}

We use the following notation: For $\vec{x},\vec{z}\!\in\! \Re^m$ -
\begin{itemize}
\item We write $\vec{x} \leqslant \vec{z}$ if $\forall j$, $x_{j}\leqslant z_{j}$.
\item We write $\vec{x}<\vec{z}$ if $\vec{x} \leqslant \vec{z}$ and for some $j$ we have strict inequality: $x_{j} < z_{j}$.
\item We write $\vec{x}\ll \vec{z}$ if for all $j$, $x_j<z_j$.
\item Throughout the paper, $i\! \in \![n]$, $j\!\in\![m]$. $\vec{x}^i$ denotes the $i$th column vector of a matrix $X$.
\item Given a matrix $X\! = \!(x^i_j)\in \Re^{n \times m}$, $X \!+\! \vec{z}\! = \! (\vec{x}^1 \!+\! \vec{z},...,\vec{x}^n \!+\! \vec{z})$
\end{itemize}

\subsection{Basic Setup}

We study a setup with $m$ infinitely divisible resources, wanted by $n$ agents.
\begin{definition}
Given quantities for goods $\vec{q}\!\in\!\Re^m_+$, an \emph{allocation} is a matrix $X\!\in\!\Re^{n \times m}_+ $ with $\sum_i x_j^i\! \leqslant\! q_j$ for all $j$.
\end{definition}
Unless stated otherwise, we assume without loss of generality that $q_j\! = \!1$ for all $j$, in which case $x_j^i$ denotes the fraction of good $j$ that bidder $i$ gets.

Each bidder $i$ has a utility function $u_i(x^i_{1},...,x^i_{m})$ that denotes his utility from receiving the bundle composed of $x^i_{j}$ fraction of each resource $j$.   Our definitions and results are insensitive to the cardinal properties of $u_i$ and only depend on ordinal ones, so we could have equivalently modeled agent preferences using a preference relation $\prec_i$. Specifically, it means that we do not make any \emph{interpersonal comparisons of utility} \cite{interpersonal}. We make the standard assumption that utility functions are non-decreasing, that is - if $\vec{x}\!\ge\!\vec{y}$, then $u(\vec{x})\!\geqslant\! u(\vec{y})$. Additionally,  each agent has an entitlement $e_i$, with $\sum_i e_i \! = \!1$ , intuitively stating how much he brought into the system or ``deserves'' of the system. It is useful to think of an agent's entitlement as $1/n$, when $n$ is the number of agents, but, following \cite{NJC}, we also allow any pre-defined entitlements $e_i$ such that $\sum_i e_i \! = \!1$, where the general intention is that an agent with twice the entitlement of another one ``deserves'' twice the allocation.

\begin{definition}
$x$ is called \emph{maximal} for $u$ if for all $y$, $u(x)\geqslant u(y)$.
\end{definition}
\begin{definition}\label{def:lns}
A utility function $u$ is called \emph{strictly monotonic} if for every non-maximal $\vec{x}$, and all $\vec{y}>>\vec{x}$. We say it is \emph{non-satiable}\footnote{One can verify that strictly monotonic and non-satiable is equivalent to the standard economic term ``locally non-satiable for all $x$''.} if it is strictly monotonic and has no maximal $\vec{x}$.
\end{definition}

If $u$ is not \emph{non-satiable}, we say that it is \emph{satiable}, and $u$ is \emph{satiated at $\vec{x}$} if $u(\vec{x})\!\geqslant\!u(\vec{y})$ for all $\vec{y}\! \in\!\Re^m_+ $.

\subsection{Parsimonious Allocations}
For the utilities in question, it is often the case that an agent can relinquish some of his bundle without
reducing his utility. For clarity, and without losing generality, we will focus on allocations where this is not true.
We use $\vec{x} \Downarrow \vec{z}$ to denote a pointwise-minimum, that is $\vec{x} \Downarrow \vec{z} =  (min(x_1,z_1),...,min(x_m,z_m))$.
\begin{definition} A bundle $\vec{x}$ is a \emph{parsimonious bundle} for $u$ if for all $\vec{z}<\vec{x}$ we have that $u(\vec{z})<u(\vec{x})$. An allocation is called \emph{parsimonious} w.r.t utility functions $u_1,...,u_n$ if each $\vec{x}^i$ is a parsimonious bundle w.r.t to $u_i$.
\end{definition}
\begin{proposition}\label{prop:cont_pars}If $u$ is continuous, then for every bundle $\vec{x}$ there exists a parsimonious bundle $\vec{y}$  such that $\vec{y}\!\leqslant\! \vec{x}$ and $u(\vec{x})\! = \!u(\vec{y})$.
\end{proposition}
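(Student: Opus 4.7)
The plan is to produce $\vec{y}$ as a minimizer of a suitable objective over a compact set of candidates that dominate $\vec{x}$ in utility but are pointwise at most $\vec{x}$. Concretely, I would define
\[
S \;=\; \{\vec{y}\in\Re_+^m \,:\, \vec{y}\leqslant \vec{x},\ u(\vec{y})\geqslant u(\vec{x})\}.
\]
This set is nonempty because it contains $\vec{x}$, it is bounded since it lies in the box $[\vec{0},\vec{x}]$, and it is closed because $u$ is continuous (so the upper level set $\{u\geqslant u(\vec{x})\}$ is closed) and the box is closed. Hence $S$ is compact.

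Next, I would pick $\vec{y}\in S$ minimizing the continuous linear objective $\sum_j y_j$ over $S$; a minimizer exists by compactness. I claim this $\vec{y}$ does the job. First, $\vec{y}\leqslant \vec{x}$ and $u(\vec{y})\geqslant u(\vec{x})$ by membership in $S$; monotonicity of $u$ (assumed in the setup) gives $u(\vec{y})\leqslant u(\vec{x})$, so in fact $u(\vec{y})=u(\vec{x})$. Second, to see that $\vec{y}$ is parsimonious for $u$, suppose for contradiction that there exists $\vec{z}<\vec{y}$ with $u(\vec{z})\geqslant u(\vec{y})$. Then $\vec{z}\leqslant \vec{y}\leqslant \vec{x}$ and $u(\vec{z})\geqslant u(\vec{x})$, so $\vec{z}\in S$, while $\sum_j z_j < \sum_j y_j$ (strict, because $\vec{z}<\vec{y}$ means at least one coordinate is strictly smaller and all others are no larger), contradicting the minimality of $\vec{y}$.

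The only subtlety worth flagging is that the definition of $\vec{z}<\vec{y}$ in this paper is the one-coordinate-strict version (see the notation subsection), which is exactly what makes the strict inequality $\sum_j z_j < \sum_j y_j$ hold and drives the contradiction; a naive reading with $\vec{z}\leqslant\vec{y}$ in place of $\vec{z}<\vec{y}$ would not. Apart from that, the argument is a standard compactness-plus-extremality trick, and I do not anticipate a real obstacle.
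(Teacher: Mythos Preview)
Your proof is correct and is essentially the same argument as the paper's: define a compact set of candidate bundles below $\vec{x}$, minimize $\sum_j y_j$ over it, and use the strict-coordinate meaning of $<$ to derive a contradiction if the minimizer were not parsimonious. The only cosmetic difference is that the paper defines its candidate set with the equality $u(\vec{z})=u(\vec{x})$ rather than your inequality $u(\vec{y})\geqslant u(\vec{x})$; under the standing monotonicity assumption these two sets coincide, so the arguments are the same.
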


\begin{proof} Let $L\! = \!\left\{\vec{z}\in \Re^m_+| u(\vec{x})\! = \!u(\vec{z})\ and\ \vec{z}\leqslant \vec{x}\right\}$.
This set is compact (it is closed by the continuity of $u$ and bounded since $\vec{z}\leqslant \vec{x}$), so a $\vec{y} \in L$ that minimizes $\sum_j y_j$
over all $\vec{z} \in L$ exists.
Since $\vec{y}\in L$, we must have $\vec{y}\leqslant \vec{x}$ and $u(\vec{x})\! = \!u(\vec{y})$.
For any bundle such that $\vec{z}<\vec{y}$, it holds that $\vec{z} \leqslant \vec{x}$ and $\sum_j z_j<\sum_j y_j$. Hence, it must be the case that $u(\vec{z})<u(\vec{x})$,
otherwise $z \in L$ and $\vec{y}$ does not minimize $\sum_j z_j$ in $L$ . \end{proof}

\subsection{Perfect Complementarity}

\emph{Perfectly Complementary} utility functions could be defined in several ways.

\begin{definition}
A utility function $u$ is \emph{perfectly complementary} if for all $\vec{x},\vec{y}$ we have that $u(\vec{x} \Downarrow \vec{y}) \! = \! min\left(u(\vec{x}),u(\vec{y})\right)$.
\end{definition}

\begin{proposition} \label{prop:perf}The following are equivalent if $u$ is continuous:
\begin{enumerate}
\item $u$ is perfectly complementary.
\item For all parsimonious bundles $\vec{x},\vec{y}$, either $\vec{x}\leqslant \vec{y}$ or $\vec{y}\leqslant \vec{x}$.
\item There exists  a function $w:\Re_+  \rightarrow (\Re_+  \cup \{\infty\})^m$, called the {\em parsimonious bundle representation} of $u$,
such that for all $t \geqslant 0$ we have that
$u(\vec{x}) \geqslant t$ if and only if $\vec{x} \geqslant w(t)$.  That is, $u$ obtains utility level at least $t$ exactly when it gets at least $w_j(t)$
amount of every good $j$.
\end{enumerate}\end{proposition}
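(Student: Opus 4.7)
The plan is to prove the cycle $(1) \Rightarrow (2) \Rightarrow (3) \Rightarrow (1)$, with $(2) \Rightarrow (3)$ being the only step that exploits continuity in an essential way; the other two are purely algebraic manipulations of the pointwise-minimum identity.

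For $(1) \Rightarrow (2)$, I would take parsimonious $\vec{x}, \vec{y}$ and set $\vec{z} = \vec{x} \Downarrow \vec{y}$. Perfect complementarity gives $u(\vec{z}) = \min(u(\vec{x}), u(\vec{y}))$; assume WLOG $u(\vec{x}) \leq u(\vec{y})$, so $u(\vec{z}) = u(\vec{x})$. Since $\vec{z} \leq \vec{x}$ and the parsimoniousness of $\vec{x}$ rules out $\vec{z} < \vec{x}$ (which would force $u(\vec{z}) < u(\vec{x})$), we must have $\vec{z} = \vec{x}$, i.e., $\vec{x} \leq \vec{y}$.

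For $(3) \Rightarrow (1)$, I first extract from (3) that $w$ is non-decreasing: substituting $\vec{x} = w(t')$ into the ``if'' direction of (3) yields $u(w(t')) \geq t'$, so whenever $t \leq t'$ the ``only if'' direction gives $w(t') \geq w(t)$. Now for arbitrary $\vec{x}, \vec{y}$, set $t_0 = \min(u(\vec{x}), u(\vec{y}))$; both $\vec{x}$ and $\vec{y}$ dominate $w(t_0)$ componentwise by (3) and the monotonicity of $w$, hence $\vec{x} \Downarrow \vec{y} \geq w(t_0)$, which forces $u(\vec{x} \Downarrow \vec{y}) \geq t_0$. The reverse inequality is immediate from $\vec{x} \Downarrow \vec{y} \leq \vec{x}, \vec{y}$ and the non-decreasingness of $u$.

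The main technical step is $(2) \Rightarrow (3)$, where $w$ must actually be constructed. Continuity of $u$ together with connectedness of $\Re^m_+$ ensures that the range of $u$ is an interval. For each $t$ in this range, Proposition \ref{prop:cont_pars} produces a parsimonious bundle with utility exactly $t$, and condition (2) forces uniqueness: any two such bundles are comparable, and a strict inequality between them would contradict the parsimoniousness of the larger one. Define $w(t)$ to be this unique bundle, extending by $w(t) = \vec{0}$ for $t$ below the range and $w(t) = (\infty, \ldots, \infty)$ for $t$ above, so that the required equivalence degenerates trivially at the endpoints. Monotonicity of $w$ on the range follows similarly from (2): if $t \leq t'$, then $w(t')$ and $w(t)$ are comparable, and $w(t') < w(t)$ would give $t' = u(w(t')) < u(w(t)) = t$, a contradiction. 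The iff then reads off cleanly: if $u(\vec{x}) \geq t$, apply Proposition \ref{prop:cont_pars} to obtain parsimonious $\vec{y} \leq \vec{x}$ with $u(\vec{y}) = u(\vec{x})$, so by uniqueness $\vec{y} = w(u(\vec{y}))$, and monotonicity yields $\vec{x} \geq \vec{y} = w(u(\vec{y})) \geq w(t)$; conversely, $\vec{x} \geq w(t)$ together with the non-decreasingness of $u$ gives $u(\vec{x}) \geq u(w(t)) = t$. The principal hurdle is this construction of $w$: existence via continuity, uniqueness via (2), and the careful bookkeeping at the boundary of the range of $u$.
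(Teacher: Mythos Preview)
Your proof is correct and follows essentially the same cycle $(1)\Rightarrow(2)\Rightarrow(3)\Rightarrow(1)$ as the paper, with the same key ingredients: the pointwise-minimum identity plus parsimony for $(1)\Rightarrow(2)$, the construction of $w(t)$ as the parsimonious bundle at level $t$ together with Proposition~\ref{prop:cont_pars} for $(2)\Rightarrow(3)$, and the straightforward sandwich argument for $(3)\Rightarrow(1)$. The only cosmetic differences are that the paper argues $(1)\Rightarrow(2)$ by contrapositive rather than directly, and that you are more explicit about uniqueness of the parsimonious bundle and the boundary behavior of $w$ outside the range of $u$ (points the paper glosses over); neither difference is substantive.
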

\begin{proof}
Assume $u$ is continuous.

{\bf $1 \Rightarrow 2$ ($\neg 2 \Rightarrow \neg 1$)}: Let $\vec{x},\vec{y}\in \Re^m_+ $ be two parsimonious bundles such that neither $\vec{x}\geqslant \vec{y}$ nor $\vec{y} \geqslant \vec{x}$. Let $\vec{z}\! = \!\vec{x}\Downarrow \vec{y}$. This means that there is some $j$ for which $x_j<y_j$ and some $l$ such that $x_l>y_l$. Clearly, $\vec{z} < \vec{x}$ and $\vec{z} < \vec{y}$. Since they are both parsimonious bundles, $u(\vec{z})<u(\vec{x})$ and $u(\vec{z})<u(\vec{y})$, and therefore $u(\vec{z})<min\left(u(\vec{x}),u(\vec{y})\right)$.

{\bf $2 \Rightarrow 3$}: Define $w(t)$ for all $t\geqslant 0$ to be a parsimonious bundle achieving utility level $t$ (and $\infty$ if such a bundle does not exist).
We need to show that $u(\vec{x})\geqslant t$ if and only if $\vec{x}\geqslant w(t)$.

($\Leftarrow$) Take a bundle $\vec{y}\geqslant w(t)$.  Then, from the monotonicity of $u$, $u(\vec{y}) \geqslant u(w(t)) \! = \! t$.

($\Rightarrow$) Let $\vec{y}$ be some bundle such that $u(\vec{y}) \geqslant t$ and let $\vec{z} \leqslant \vec{y}$  be a parsimonious bundle with $u(\vec{z})\! = \!u(\vec{y})$ (which exists by Proposition  \ref{prop:cont_pars}).
Since $\vec{z}$ and $w(t)$ are both parsimonious, by assumption, either $\vec{z} \geqslant w(t)$ or $\vec{z} \leqslant w(t)$. If it is the latter, $u(\vec{z})\leqslant u(w(t))\! = \!t$ which means $u(\vec{z})\! = \!t$ resulting in $\vec{z}\! = \!w(t)$ (otherwise, $w(t)$ is  not parsimonious).
Either way, $\vec{y} \geqslant \vec{z} \geqslant w(t)$, as required.

{\bf $3 \Rightarrow 1$}: We trivially have that $u(\vec{x} \Downarrow \vec{y}) \leqslant min\left(u(\vec{x}),u(\vec{y})\right)$ so we only need to show the opposite inequality.
Let $t\! = \!min(u(\vec{x}),u(\vec{y}))$. From (3) it follows that $\vec{x} \geqslant w(t)$ and $\vec{y} \geqslant w(t)$, and so $\vec{x} \Downarrow \vec{y} \geqslant w(t)$ so  $u(\vec{x} \Downarrow \vec{y}) \geqslant t$ as required.

This concludes the proof that the three definitions are equivalent for a continuous utility function $u$.  \end{proof}

\begin{lemma}\label{lemma:parsi} If $\vec{x}$ is a parsimonious bundle and $u$ is a perfectly complementary utility function, then $\vec{x}=w(u(\vec{x}))$\end{lemma}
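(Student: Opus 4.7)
The plan is to use Proposition \ref{prop:perf} part (2) to compare $\vec{x}$ with the bundle $w(u(\vec{x}))$, both of which are parsimonious bundles attaining the same utility value.

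First I would set $t = u(\vec{x})$ and recall that, by the construction of $w$ in the proof of Proposition \ref{prop:perf}, $w(t)$ is itself a parsimonious bundle satisfying $u(w(t)) = t$. So we have two parsimonious bundles, $\vec{x}$ and $w(t)$, both attaining utility level $t$. Applying Proposition \ref{prop:perf}(2) to this pair, either $\vec{x} \leqslant w(t)$ or $w(t) \leqslant \vec{x}$.

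Next I would rule out the strict-inequality version of each case using parsimony. If $\vec{x} < w(t)$, then parsimony of $w(t)$ gives $u(\vec{x}) < u(w(t)) = t$, contradicting $u(\vec{x}) = t$. Symmetrically, if $w(t) < \vec{x}$, then parsimony of $\vec{x}$ gives $u(w(t)) < u(\vec{x}) = t$, contradicting $u(w(t)) = t$. Thus whichever ordering we are in must actually be an equality, yielding $\vec{x} = w(t) = w(u(\vec{x}))$.

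There is no real obstacle here; the one point requiring care is simply that the $w$ supplied by Proposition \ref{prop:perf}(3) is defined precisely as a \emph{parsimonious} bundle of the appropriate utility level, which is what licenses invoking part (2) of that proposition on the pair $\vec{x}, w(t)$. The perfect-complementarity hypothesis enters only through this use of Proposition \ref{prop:perf}, so the argument is essentially a two-line consequence of the characterization already established.
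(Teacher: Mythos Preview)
Your argument is correct. The paper's proof takes a slightly different (and marginally shorter) route: instead of invoking part (2) of Proposition~\ref{prop:perf} on the pair $\vec{x},\, w(t)$, it uses part (3) directly---from $u(\vec{x}) \geqslant t$ one immediately gets $\vec{x} \geqslant w(t)$, so only the single case $w(t) \leqslant \vec{x}$ arises, and then parsimony of $\vec{x}$ alone (not of $w(t)$) rules out strict inequality. Your approach trades this asymmetry for symmetry: you treat both bundles as parsimonious and handle both orderings from part (2). The cost is that you must appeal to the \emph{construction} of $w$ inside the proof of Proposition~\ref{prop:perf} (where $w(t)$ is defined to be a parsimonious bundle), rather than just to the statement of part (3); the paper avoids needing that $w(t)$ itself is parsimonious. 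Either way the argument is a two-line consequence of Proposition~\ref{prop:perf}, and your version is perfectly valid.
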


\begin{proof}
Since $u$ is perfectly complementary, by Proposition \ref{prop:perf}, $\vec{x}\geqslant w(u(\vec{x}))$. Since $\vec{x}$ is a parsimonious bundle, for all $\vec{z}<\vec{x}$, $u(\vec{z})<u(\vec{x})$. Let $\vec{z}\! = \!w(u(\vec{x}))$. By definition, $u(\vec{z})\! = \!u(\vec{x})$.  However, if there exists some $j$ such that $x_j>z_j$, it contradicts $\vec{x}$ being parsimonious. \end{proof}

Note that Proposition \ref{prop:perf} implies that $w(t)$ must be non-decreasing.
However, the parsimonious bundle representation $w$ of a continuous perfectly complementary utility $u$ is not always continuous itself. It turns out that $u$ being strictly monotonic is equivalent to the continuity of $w$ (for all $t\in [0, max_{\vec{x}} u(\vec{x})]$).

\begin{lemma}\label{lemma:w_cont}
A continuous perfectly complementary utility $u$ is strictly monotonic if and only if its parsimonious bundle representation $w$ is continuous on the domain $\{t|0\leqslant t\leqslant max_{\vec{x}} u(\vec{x})\}$.\end{lemma}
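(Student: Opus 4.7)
The plan is to use the parsimonious bundle representation $w$ from Proposition \ref{prop:perf} --- in particular the characterization $u(\vec{x})\geqslant t \Leftrightarrow \vec{x}\geqslant w(t)$ and the fact (also noted there) that $w$ is non-decreasing componentwise --- together with the continuity of $u$.

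For the direction ``$w$ continuous $\Rightarrow$ $u$ strictly monotonic'', I would take any non-maximal $\vec{x}$ with $t_0:=u(\vec{x})<\max_{\vec{z}}u(\vec{z})$ and any $\vec{y}\gg\vec{x}$, and pick $\eta>0$ with $\vec{y}\geqslant\vec{x}+\eta(1,\ldots,1)\geqslant w(t_0)+\eta(1,\ldots,1)$ (the second inequality using $\vec{x}\geqslant w(t_0)$ from Proposition \ref{prop:perf}). Continuity of $w$ at $t_0$ --- which supplies nearby values from the right because $t_0<\max u$ leaves room in the domain --- produces some $s>t_0$ with $w(s)\leqslant w(t_0)+\eta(1,\ldots,1)\leqslant\vec{y}$, and Proposition \ref{prop:perf} then yields $u(\vec{y})\geqslant s>t_0=u(\vec{x})$, as required.

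For the reverse direction I would treat left and right continuity of $w$ separately. \emph{Left continuity} actually holds without strict monotonicity: for any $t_n\uparrow t_0$, the componentwise non-decreasing sequence $w(t_n)$ is bounded above by $w(t_0)$ and thus converges to some $\vec{v}\leqslant w(t_0)$; continuity of $u$ forces $u(\vec{v})=\lim u(w(t_n))=t_0$, but parsimony of $w(t_0)$ rules out any $\vec{v}<w(t_0)$, so $\vec{v}=w(t_0)$. \emph{Right continuity} is where strict monotonicity is used. For $t_0<\max u$ and any $\epsilon>0$, set $\vec{y}_\epsilon:=w(t_0)+\epsilon(1,\ldots,1)\gg w(t_0)$. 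Since $w(t_0)$ is non-maximal (it attains utility $t_0<\max u$), strict monotonicity gives $t_\epsilon:=u(\vec{y}_\epsilon)>t_0$; then for every $s\in(t_0,t_\epsilon]$ the monotonicity of $w$ together with Proposition \ref{prop:perf} applied to $\vec{y}_\epsilon$ sandwiches $w(t_0)\leqslant w(s)\leqslant w(t_\epsilon)\leqslant\vec{y}_\epsilon=w(t_0)+\epsilon(1,\ldots,1)$, giving $\|w(s)-w(t_0)\|_\infty\leqslant\epsilon$, which establishes right continuity at $t_0$.

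The main obstacle I anticipate is bookkeeping near the top of the domain: I must ensure both that a suitable $s>t_0$ exists when invoking continuity of $w$ from the right, and that $w(t_0)$ is genuinely non-maximal when invoking strict monotonicity. Both concerns are dispatched by the relevant non-maximality hypothesis --- $\vec{x}$ non-maximal in the forward direction and $t_0<\max u$ in the backward direction --- which in each case guarantees that strictly higher utility levels are available above $t_0$.
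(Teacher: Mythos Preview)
Your proof is correct. The ``if'' direction matches the paper's argument essentially line for line, and is in fact slightly cleaner: you use $\vec{x}\geqslant w(t_0)$ from Proposition~\ref{prop:perf}, whereas the paper writes $\vec{x}=w(t_x)$ via Lemma~\ref{lemma:parsi}, which strictly speaking presupposes that $\vec{x}$ is parsimonious.

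The ``only if'' direction takes a genuinely different route. The paper argues coordinate-by-coordinate: since each $w_j$ is non-decreasing, continuity is equivalent to the range of $w_j$ having no gaps, and the paper shows directly that every value $0\leqslant\alpha<w_j(t)$ is attained by setting $t'=u(\alpha,z_{-j})$ for a suitable $\vec{z}\gg w(t)$ and using strict monotonicity to force $w_j(t')=\alpha$ exactly. You instead split into left and right continuity and treat $w$ as a vector-valued function throughout: left continuity falls out of the continuity of $u$ alone (via a monotone limit of parsimonious bundles and the parsimony of $w(t_0)$), while right continuity uses strict monotonicity through the sandwich $w(t_0)\leqslant w(s)\leqslant \vec{y}_\epsilon=w(t_0)+\epsilon(1,\ldots,1)$. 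Your decomposition has the virtue of isolating precisely where strict monotonicity is needed (only for right continuity), a distinction the paper's ``no gaps'' argument does not make visible; the paper's approach is a touch slicker but less informative in that respect.
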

\begin{proof}

(If):
Take a point $\vec{x}$ for which $u$ is not satiated, $u(\vec{x}) =t_x$, and some $\vec{y}$ such that $\vec{y} \gg \vec{x}$. Consider $w(t_x) \!+\! \varepsilon)$ for some $\varepsilon>0$, which is a parsimonious bundle with utility $t_x \!+\! \varepsilon > t_x$.
(Since $u$ is not satiated at $\vec{x}$, $\infty \gg w(t_x) \!+\! \varepsilon)$ for a small enough $\varepsilon$). Since $\vec{y}\gg \vec{x}=w(t_x)$ (by Lemma \ref{lemma:parsi} ), by continuity of $w$, for sufficiently small $\varepsilon$ we still have $\vec{y} \gg w(t_x) \!+\! \varepsilon)$, and so $u(\vec{y}) \geqslant t_x \!+\! \varepsilon > t_x$, as needed.

(Only if):
Let $(\alpha,z_{-j})$ be the vector obtained from replacing $\vec{z}$'s $j$th coordinate by $\alpha$. By definition, each coordinate of $w$ is non-decreasing, so to prove continuity it suffices to prove for every coordinate $j$ that there are no gaps in the values that $w_j(t)$ attains. Assume that a certain value $\beta \! = \! w_j(t)$ is achieved. We
need to show that for all $0 \leqslant \alpha < \beta$, there exists a parsimonious
bundle $w(t')$ such that $w_j(t')\! = \!\alpha$.

Take some bundle $\vec{z} \gg w(t)$. Let $t'\! = \!u(\alpha, z_{-j})$, and consider $w(t')$. From the monotonicity of $u$, and  Proposition \ref{prop:cont_pars}, it follows that $\vec{z} \gg w(t)\geqslant w(t')$. Since $w(t')$ is parsimonious, by definition, $w_j(t')\le\alpha$. It must be, therefore, that $w_j(t')\! = \!\alpha$, since otherwise, $w(t')\ll (\alpha,z_{-j})$, which contradicts $u$ being strictly monotonic. \end{proof}

In addition, perfectly complementary utilities are inherently quasi-concave.
\begin{definition}\label{def:sqc}$u$ is \emph{quasi-concave} if $u(\vec{x}) \geqslant u(\vec{y})$ implies that $u(\lambda \vec{x}  \!+\!  (1-\lambda) \vec{y})\geqslant u(\vec{y})$ for all $0<\lambda<1$.
It is \emph{strictly quasi-concave} if $u(\vec{x})>u(\vec{y})$ implies that $u(\lambda \vec{x}  \!+\!  (1-\lambda) \vec{y})>u(\vec{y})$  for all $0<\lambda<1$.
\end{definition}

\begin{lemma}\label{lemma:pcqc} Every perfectly complementary utility $u$ is  quasi-concave. \end{lemma}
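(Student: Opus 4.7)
The plan is to combine perfect complementarity with the standing monotonicity assumption in the most direct way. Given $\vec{x},\vec{y}$ with $u(\vec{x})\geqslant u(\vec{y})$ and any $\lambda\in(0,1)$, I would form two bundles to compare: the pointwise minimum $\vec{x}\Downarrow\vec{y}$ and the convex combination $\lambda\vec{x}+(1-\lambda)\vec{y}$.

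First I would invoke the definition of perfect complementarity to conclude
\[
u(\vec{x}\Downarrow\vec{y})=\min\!\bigl(u(\vec{x}),u(\vec{y})\bigr)=u(\vec{y}),
\]
using the hypothesis $u(\vec{x})\geqslant u(\vec{y})$. Next I would observe the elementary coordinatewise inequality $\lambda x_j+(1-\lambda)y_j\geqslant \min(x_j,y_j)$ for every $j$, which says exactly that $\lambda\vec{x}+(1-\lambda)\vec{y}\geqslant \vec{x}\Downarrow\vec{y}$. Finally, applying the non-decreasing property of $u$ (the standing assumption made in the basic setup) yields
\[
u\!\bigl(\lambda\vec{x}+(1-\lambda)\vec{y}\bigr)\geqslant u(\vec{x}\Downarrow\vec{y})=u(\vec{y}),
\]
which is the quasi-concavity condition of Definition \ref{def:sqc}.

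There is essentially no obstacle here: the argument is a two-line chain once one notices that $\vec{x}\Downarrow\vec{y}$ is coordinatewise dominated by every convex combination of $\vec{x}$ and $\vec{y}$. The only subtlety worth flagging is that this argument gives quasi-concavity but \emph{not} strict quasi-concavity, since the relevant inequality $\lambda x_j+(1-\lambda)y_j\geqslant \min(x_j,y_j)$ can be an equality (e.g.\ on coordinates where $x_j=y_j$), so no strict improvement in $u$ is forced even when $u(\vec{x})>u(\vec{y})$. That is consistent with the geometric picture in Figure \ref{fig:pcind}, where indifference curves have flat portions.
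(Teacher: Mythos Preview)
Your proof is correct and in fact more elementary than the paper's. The paper argues via the parsimonious bundle representation $w$ from Proposition~\ref{prop:perf}: setting $t_1=u(\vec{x})$, $t_2=u(\vec{y})$, it uses $\vec{x}\geqslant w(t_1)\geqslant w(t_2)$ and $\vec{y}\geqslant w(t_2)$ to bound each coordinate of the convex combination below by $w_j(t_2)$, and then reads off $u(\lambda\vec{x}+(1-\lambda)\vec{y})\geqslant t_2$. Your route bypasses $w$ entirely by comparing the convex combination directly to $\vec{x}\Downarrow\vec{y}$ and invoking the defining identity $u(\vec{x}\Downarrow\vec{y})=\min(u(\vec{x}),u(\vec{y}))$.

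What this buys you is a small gain in generality: the paper's proof, as written, leans on Proposition~\ref{prop:perf}, whose statement assumes continuity of $u$; your argument needs only the definition of perfect complementarity together with the standing non-decreasing assumption, so it establishes the lemma exactly as stated, with no implicit continuity hypothesis. Your closing remark about why the argument does not yield strict quasi-concavity is also on point and matches the paper's separate treatment of that case in Lemma~\ref{lemma:pclns_sqc}.
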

\begin{proof}
Given a perfectly complementary utility function $u$, look at any pair $\vec{x},\vec{y}$ such that $u(\vec{x}) \geqslant u(\vec{y})$.
Let $u(x)\! = \!t_1$ and $u(y)\! = \!t_2$.
By Proposition 2, $\vec{x}\geqslant w(t_1)$ and $\vec{y}\geqslant w(t_2)$.
By Proposition \ref{prop:perf} and the monotonicity of $w$, $w(t_1)\geqslant w(t_2)$. Then, for all goods $j$ and any $\lambda\in (0,1)$ :

\begin{eqnarray*}
\lambda x_j  \!+\!  (1-\lambda) y_j &\ge& \lambda w_j(t_1)  \!+\!  (1-\lambda) w_j(t_2) \\
& \geqslant & \lambda w_j(t_2)  \!+\!  (1-\lambda) w_j(t_2) \\
& \! = \! & w_j(t_2)\end{eqnarray*}
Therefore, $u(\lambda \vec{x}  \!+\!  (1-\lambda) \vec{y})\geqslant u(w(t_2))\! = \! u(\vec{y})$. \end{proof}

If the utility function $u$ is also strictly monotonic, we can make the following stronger statement:

\begin{lemma}\label{lemma:pclns_sqc}
Every perfectly complementary and strictly monotonic utility function $u$ is strictly quasi-concave.
\end{lemma}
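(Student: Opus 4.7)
The plan is to mimic the proof of Lemma \ref{lemma:pcqc}, upgrading the weak inequalities there to strict ones. Strict monotonicity enters only through its consequence in Lemma \ref{lemma:w_cont}: the parsimonious bundle representation $w$ is continuous on $[0,\max_{\vec{x}}u(\vec{x})]$. I will use this continuity to locate a utility level $t^{\star}$ strictly above $t_2=u(\vec{y})$ such that $\vec{z}=\lambda\vec{x}+(1-\lambda)\vec{y}$ still dominates $w(t^{\star})$ coordinate-wise.

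First I would put $t_1=u(\vec{x})>u(\vec{y})=t_2$ and, exactly as in Lemma \ref{lemma:pcqc}, use Proposition \ref{prop:perf} to obtain $\vec{x}\geqslant w(t_1)$, $\vec{y}\geqslant w(t_2)$, and $w(t_1)\geqslant w(t_2)$. By Lemma \ref{lemma:parsi}, $u(w(t_1))=t_1\neq t_2=u(w(t_2))$, so these two parsimonious bundles differ and the index set $J=\{j:w_j(t_1)>w_j(t_2)\}$ is non-empty. For each $j\in J$, the computation of Lemma \ref{lemma:pcqc} sharpens to $z_j\geqslant \lambda w_j(t_1)+(1-\lambda)w_j(t_2)>w_j(t_2)$, using $\lambda\in(0,1)$ and the strict inequality $w_j(t_1)>w_j(t_2)$. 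For each $j\notin J$, the coordinate $w_j$ is non-decreasing with equal values at $t_2$ and $t_1$, hence constant equal to $w_j(t_2)$ throughout $[t_2,t_1]$.

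To close the argument, I would invoke Lemma \ref{lemma:w_cont}: continuity of $w$ at $t_2$, together with $w_j(t_2)<z_j$ for $j\in J$, yields some $\epsilon\in(0,t_1-t_2]$ such that $w_j(t_2+\epsilon)<z_j$ for every $j\in J$. Setting $t^{\star}=t_2+\epsilon\in(t_2,t_1]$, the previous paragraph also gives $w_j(t^{\star})=w_j(t_2)\leqslant z_j$ for $j\notin J$, so $\vec{z}\geqslant w(t^{\star})$. A final application of Proposition \ref{prop:perf} then gives $u(\vec{z})\geqslant t^{\star}>t_2=u(\vec{y})$, as required.

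The main obstacle, and the reason strict monotonicity is essential rather than merely perfect complementarity, is the coordinates $j\notin J$: one cannot in general arrange $\vec{z}\gg w(t_2)$ (simple Leontief examples show that raising a single coordinate above $w(t_2)$ need not raise utility), so a direct appeal to strict monotonicity of $u$ at $\vec{y}$ fails. The detour through continuity of $w$ and the intermediate level $t^{\star}$ is what lets us bypass this issue.
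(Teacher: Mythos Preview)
Your proof is correct and follows essentially the same route as the paper: both locate a level $t^{\star}\in(t_2,t_1]$ with $\vec{z}\geqslant w(t^{\star})$ via the continuity of $w$ supplied by Lemma~\ref{lemma:w_cont}, and then apply Proposition~\ref{prop:perf}. The only cosmetic difference is the coordinate split---you partition by whether $w_j(t_1)>w_j(t_2)$ and use right-continuity of $w$ at $t_2$, whereas the paper partitions by whether $x_j>y_j$ and uses the intermediate-value property of each $w_j$ to pick $\hat{t}_j>t_2$ with $w_j(\hat{t}_j)=z_j$, then sets $\hat{t}=\min(\min_j\hat{t}_j,t_1)$.
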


%\footnote{We will use strict quasi-concavity of functions in section 5.}

\begin{proof}
Let $\vec{x},\vec{y}$ be two bundles such that $u(\vec{x}) > u(\vec{y})$.
Let $u(\vec{x})\! = \!t_1$, $u(\vec{y})\! = \!t_2$, and $\vec{z}\! = \!\lambda \vec{x}  \!+\!  (1-\lambda)\vec{y}$.
Given $j$, if $ y_j\geqslant x_j$, then $y_j\geqslant z_j \geqslant x_j \geqslant\! w_j(t_1)$.
Otherwise, $x_j>z_j>y_j$, there is some $\hat{t}_j>t_2$ such that $w_j(\hat{t}_j)\! = \!z_j$ (Lemma \ref{lemma:w_cont}). For $\hat{t}\! = \!min(min_j(\hat{t}_j),t_1)$, $\hat{t}>t_2$ and therefore $\vec{z}\geqslant w(\hat{t})$. Hence, by definition $u(\vec{z})\geqslant \hat{t} > t_2 \! = \! u(\vec{y})$ as required. \end{proof}

\noindent {\bf Example: Leontief Functions}

Leontief utilities are one example of perfectly complementary utility functions.
These utility functions express an interest in a certain fixed {\em proportion of resources}.

$u$ is \emph{Leontief} if $u(\vec{x})\! = \!\min_{j\in S} (x_j/r_j)$ for some constants $\vec{r} \geqslant 0$, where $S \! = \! \{j | r_j > 0\}$.  This means that in order to obtain utility level $t\geqslant 0$, the agent needs at least $r_j t$ fraction of each good $j$. Thus, the parsimonious bundle for each utility level $t$ is $w(t)\! = \!(r_1 t,...,r_m t)$.

Leontief utilities are non-satiable.  The utilities considered in \cite{NJC} are satiable versions of Leontief utilities: $u(\vec{x})\! = \!\min(1, \min_{j\in S} (x_j/r_j))$, where $0 \! \leqslant \! r_j \! \leqslant \! 1$; i.e. the maximum utility of 1 is obtained at the minimal saturation bundle $(r_1 ... r_m)$.  We call these {\em satiable Leontief utilities}.

Note that a parsimonious bundle $\vec{x}$ for an agent with a Leontief utility needs to obtain the minimum on all coordinates, and therefore maintains $\vec{x} = \alpha \vec{r}$ for some constant $\alpha$.

\subsection{Efficiency}
The first requirement from an allocation is obviously to be efficient.
We start with a very weak notion of efficiency that intuitively does not assume the possibility of trade.
Though in general this notion of efficiency is significantly weaker than Pareto efficiency, we show that for perfectly complementary utilities it implies Pareto efficiency.

\begin{definition} An allocation is \emph{non-wasteful} if :
\begin{enumerate}
\item For all $i$, the bundle $\vec{x}^i$ is \emph{parsimonious} for $u_i$.
\item $\forall i\,\, u_{i}(\vec{x}^i \!+\! \vec{z})\! = \!u_{i}(\vec{x}^i)$
where $z_{j}\! = \!1-{\sum_{i}x_j^i}$.
\end{enumerate}
\end{definition}

Intuitively, these are allocations where no agent gets more unless it improves his utility, and goods are left on the table only if they cannot improve the agents' utilities.

Our interest is in economies where trade does not promote efficiency (the opposite of what is usual in economic theory).

Let $\vec{u}(X)\! = \!(u_1(\vec{x}^1),...,u_n(\vec{x}^n))$.
\begin{definition}
An allocation $X$ is \emph{Pareto efficient} if there is no other allocation $Z$  such that $\vec{u}(Z) > \vec{u}(X)$.
\end{definition}

Pareto-efficiency is stronger than being non-wasteful as it also requires that no bilateral (and multi-lateral) trade that is mutually beneficial is possible, while non-wasteful only requires goods not to be left on the table if they can be of any use to anyone. %, a condition that is in general much weaker.  %gutmant: This is redundant
However, for perfectly complementary utilities, non-wastefulness is a sufficient condition for Pareto-efficiency. To obtain equivalence we must add the requirement of parsimony to Pareto-efficiency. We can add it without loss of generality as we can replace any Pareto efficient allocation $X$ by a parsimonious allocation $Y \leqslant X$ that provides all agents with the same utilities.

\begin{definition}An economy $u_1,...,u_n$ is called a no-trade economy if every non-wasteful allocation is also Pareto efficient.\end{definition}

\begin{proposition} An economy composed of perfectly complementary utilities is a no-trade economy.\end{proposition}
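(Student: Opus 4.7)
The plan is a short proof by contradiction that leverages the parsimonious-bundle characterization of perfectly complementary utilities (Proposition~\ref{prop:perf}, Lemma~\ref{lemma:parsi}) to compare a non-wasteful allocation against a hypothetical Pareto improvement coordinate-by-coordinate.

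Suppose $X$ is non-wasteful but there exists an allocation $Z$ with $\vec{u}(Z) > \vec{u}(X)$. First I would argue that $\vec{z}^i \geqslant \vec{x}^i$ holds coordinate-wise for \emph{every} agent $i$. The reason is that $u_i(\vec{z}^i) \geqslant u_i(\vec{x}^i)$ together with Proposition~\ref{prop:perf}(3) forces $\vec{z}^i \geqslant w_i(u_i(\vec{x}^i))$; because $\vec{x}^i$ is parsimonious (this is part of non-wastefulness), Lemma~\ref{lemma:parsi} identifies $w_i(u_i(\vec{x}^i))$ with $\vec{x}^i$ itself, giving $\vec{z}^i \geqslant \vec{x}^i$.

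Next, fix an agent $k$ at which the improvement is strict: $u_k(\vec{z}^k) > u_k(\vec{x}^k)$. I need to upgrade $\vec{z}^k \geqslant \vec{x}^k$ to \emph{strict} inequality in some coordinate. This follows again from Proposition~\ref{prop:perf}(3): if we had $w_k(u_k(\vec{z}^k)) \leqslant w_k(u_k(\vec{x}^k))$, applying the characterization to $\vec{x} := w_k(u_k(\vec{x}^k))$ would yield $u_k(\vec{x}^k) \geqslant u_k(\vec{z}^k)$, contradicting strictness. Hence some coordinate $j$ satisfies $z^k_j \geqslant w_{k,j}(u_k(\vec{z}^k)) > x^k_j$.

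Now I would close the contradiction using the allocation budget. Summing over agents, $z^k_j > x^k_j$ combined with $\vec{z}^i \geqslant \vec{x}^i$ for $i \neq k$ and $\sum_i z^i_j \leqslant 1$ forces $\sum_i x^i_j < 1$, so the leftover vector $\vec{z}$ with $z_\ell := 1 - \sum_i x^i_\ell$ is strictly positive at coordinate $j$. Moreover the feasibility of $Z$ gives, coordinate-wise,
\[
z^k_\ell \;\leqslant\; 1 - \sum_{i\neq k} z^i_\ell \;\leqslant\; 1 - \sum_{i\neq k} x^i_\ell \;=\; x^k_\ell + z_\ell,
\]
so $\vec{z}^k \leqslant \vec{x}^k + \vec{z}$. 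Monotonicity of $u_k$ then yields $u_k(\vec{x}^k + \vec{z}) \geqslant u_k(\vec{z}^k) > u_k(\vec{x}^k)$, directly contradicting property~(2) of the non-wastefulness of $X$. The only delicate point — and the one I would spell out carefully — is step~(3), the upgrade to a strictly greater coordinate in $\vec{z}^k$; everything else is routine bookkeeping once the parsimonious representation is in hand.
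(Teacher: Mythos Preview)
Your proof is correct and follows essentially the same route as the paper's: compare the hypothetical Pareto improvement to $X$ coordinate-wise via the parsimonious structure, deduce that the strictly-improved agent's bundle is dominated by $\vec{x}^k + (\text{leftover})$, and contradict property~(2) of non-wastefulness. The paper's version is terser---it replaces the improving allocation by a parsimonious one up front and invokes Proposition~\ref{prop:perf}(2) directly---whereas you keep $Z$ arbitrary and work through $w_i$ and Lemma~\ref{lemma:parsi}; but the skeleton and the contradiction are identical. (One cosmetic remark: you use $\vec{z}$ both for the columns of $Z$ and for the leftover vector; renaming one of them would avoid confusion.)
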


\begin{proof}
Let $X$ be some non-wasteful allocation for $n$ perfectly complementary bidders.  Let $z_j\! = \!1-\sum_i x^i_j$. Since $X$ is non-wasteful, for all $i$, $u_i(\vec{x}^i)\! = \!u_i(\vec{x}^i \!+\! \vec{z})$. Assume, by way of contradiction, that there is an allocation $Y$ such that $\vec{u}(Y) > \vec{u}(X)\! = \!\vec{u}(X \!+\! \vec{z})$.
Therefore, there is some $i$ for which $u_i(\vec{y}^i)>u_i(\vec{x^i})$. We can assume, w.l.o.g, that $Y$ is a parsimonious allocation. By Proposition \ref{prop:perf}, $\vec{y}^i>\vec{x}^i$. Since $Y$ is an allocation, it must hold that $y^i-x^i\leqslant 1-\sum_k x^k$, but that contradicts $X$ being non-wasteful.
%Since the $u_i$s are perfectly complementary,  this is true if and only if $\vec{y}^i\geqslant \vec{x}^i \!+\! \vec{z}$ for all $i$, and for some $i$ the inequality is strict. Let $k$ be one such inequality, and let $j$ be a good for which $y^k_j>x^k_j \!+\! z_j$. We now notice that $1\! = \! z_j  \!+\!  \sum_i x^i_j < y^k_j  \!+\!  \sum_{i\ne k} x^i_j \leqslant \sum_i y^i_j$, which contradicts $Y$ being an allocation.
\end{proof}

\section{Fairness Notions}

Fairness of an allocation is an elusive concept. There are many opinions on what counts as fair, which sometimes vary depending on the application.
As a result, there are many fairness notions for allocations in various models.
Some recent papers dealing with the same motivation suggested different fairness notions.
One was introduced by Dolev et al. \cite{NJC} and the other by Ghodsi et al. \cite{DRF}.
In this section, we rephrase the definitions of \cite{NJC} and \cite{DRF},
generalize, and compare them.\footnote{In the case of Ghodsi et al.,
since the paper has a systems flavor, we use our understanding of their somewhat informal definitions.}

\subsection{Dominant / Generalized Resource Fairness}
The notion of ``Dominant Resource Fairness'' was advocated in \cite{DRF}.
Conceptually, we view this notion as combining two elements: the first is the
choice of defining fairness between the agents in the Rawlsian \cite{rawls} sense of maximizing the welfare of the poorest agent,
and the second  is the choice of quantifying the allocation of an agent according to the maximum share of any resource he got.
We generalize this notion of fairness by sticking to the first element of fairness and allowing a variety of choices for the second.

Let $|| \cdot ||$ be a norm on $\Re^m$. We think of $||\vec{x}^i||$ as our measure of ``how much'' $i$ obtains - a scalar that quantifies the bundle $\vec{x}^i$. In fact, we don't really need all the properties of a norm, just monotonicity and continuity.

\begin{definition} Given a norm $|| \cdot ||$ on $\Re^m$ and two parsimonious allocations $X,Y\in \Re^{n \times m}_+ $  (where $\vec{x}^i,\vec{y}^i$ are column vectors in $X,Y$ respectively).
We say that  $X$ is fairer than $Y$ (according to $|| \cdot ||$) if
\[\min_{i : ||\vec{x}^i|| \ne ||\vec{y}^i||} (||\vec{x}^i||) \geqslant \min_{i : ||\vec{x}^i|| \ne ||\vec{y}^i||} (||\vec{y}^i||)\]
That is, if we look at the agent with the minimal allocation that differs between $X$ and $Y$, then that agent gets (weakly) more in $X$.
Generalizing to the case that agents come with pre-defined entitlements, we also say that $X$ is fairer than $Y$ under
entitlements $e_1,...,e_n$ if
\[\min_{i : ||\vec{x}^i|| \ne ||\vec{y}^i||} (||\vec{x}^i/e_i||) \geqslant \min_{i : ||\vec{x}^i|| \ne ||\vec{y}^i||} (||\vec{y}^i/e_i||)\] We say that an allocation $X$ is \emph{$||\cdot||$-fair} if for every other allocation $Y$, we have that $X$ is fairer than $Y$ (And similarly for fairness under entitlements.).
\end{definition}

In \cite{DRF}, the $L_{\infty}$ norm was used, and the fairness notion was termed ``Dominant Resource Fairness'' (DRF).
Other natural choices would be the $L_1$ norm, referred to as ``Asset Fairness'' in \cite{DRF}, which counts total resource use,
and intermediate norms such as $L_2$.  The following example shows the different fair allocations for these three choices of norm:

\begin{figure} [h!]
\centering
 \subfloat[Request Matrix]{\label{fig:reqgrf}\includegraphics[width = 0.40\textwidth]{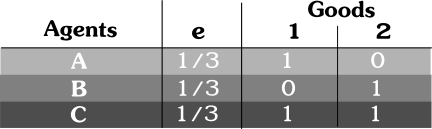}} \\

 \subfloat[$L_\infty$]{\label{fig:linfty}\includegraphics[width = 0.20\textwidth]{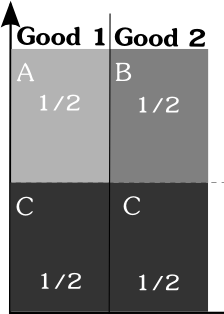}}
  \subfloat[$L_1$]{\label{fig:l1}\includegraphics[width = 0.20\textwidth]{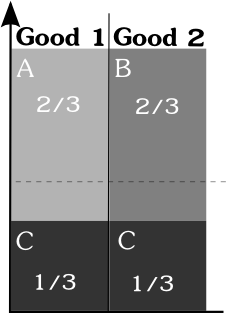}}
    \subfloat[$L_2$]{\label{fig:l2}\includegraphics[width = 0.20\textwidth]{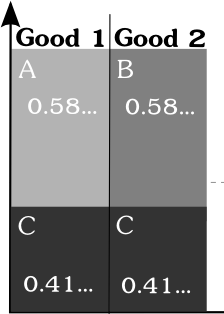}}
\caption{Figure \ref{fig:reqgrf} depicts the agents' interest in the resources and their respective entitlements. Figures (b)-(d) show the different allocations corresponding to three different norms. $0.41..$ is the solution to the equation $2x^2 = (1-x)^2$.}
\label{fig:grf}
\end{figure}

As seen in Figure \ref{fig:grf}, it is hard to tell which norm provides an intuitively fairer allocation, and each may be appropriate in some cases.
One may argue that $L_\infty$ is unfair since agent C receives too much in total resources, while $L_1$ is unfair as C receives too little of any specific resource.  $L_2$ gives an intermediate allocation, but then again, it may be argued that C still receives too much in total.

\subsection{No Justified Complaints}

This fairness notion was originally stated using Leontief utilities. We suggest a generalization of this fairness notion to perfectly complementary utilities:
\begin{definition} {\bf No Justified Complaints for PC utilities}
An allocation $X$ has the property of \emph{``No Justified Complaints''} (NJC) if for all agents $i$:
\begin{enumerate}
\item $\vec{x}^i$ is a parsimonious bundle.
\item Either there is some ``bottleneck'' good $j$ such that $\sum_i x^i_j\! = \!1$ with $x^i_j\geqslant e_i$, or
$u_i$ is satiated at $\vec{x}^i$: $u_i(\vec{x}^i) \geqslant u_i(\vec{z})$ for all $\vec{z}\in \Re^m$.
\end{enumerate}
\end{definition}
If an allocation has the NJC property, we will say it is \emph{``Bottleneck-Based Fair''} (BBF).

Restricting the above definition to satiable Leontief utilities reduces to the original definition in \cite{NJC}.

One can easily verify that the examples of $||\cdot||$-fair allocations above are all BBF. Of course, all criticism of these allocations applies to BBF as well. However, not all $||\cdot||$-fair allocations are BBF.
The examples in Figure \ref{fig:njc} show that BBF and norm-fairness are unrelated notions.

\begin{figure} [h!]
\centering
 \subfloat[Request Matrix]{\label{fig:reqnjc}\includegraphics[width = 0.4\textwidth]{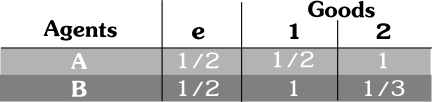}}\\
 \subfloat[$L_\infty$]{\label{fig:njclinfty}\includegraphics[width = 0.2\textwidth]{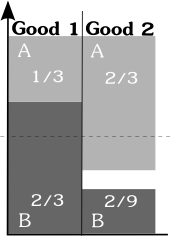}}
  \subfloat[$L_1$]{\label{fig:njcl1}\includegraphics[width = 0.2\textwidth]{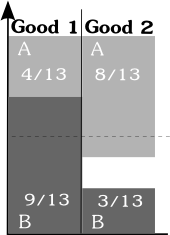}}
    \subfloat[$L_2$]{\label{fig:njcl2}\includegraphics[width = 0.2\textwidth]{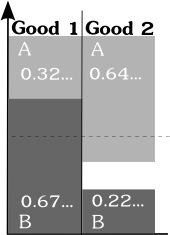}}
  \subfloat[BBF]{\label{fig:njcalloc}\includegraphics[width = 0.2\textwidth]{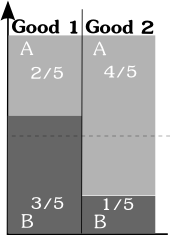}}
\caption{In the setup above, none of the $||\cdot||$-Fair allocations is BBF, since agent A does not have his entitlement of a bottleneck resource. (\ref{fig:njcalloc}) shows a BBF allocation of this setup, that does not satisfy any of the aforementioned norm-fairness notions.}
\label{fig:njc}
\end{figure}

\section{Computing a Norm-Fair Allocation}
%We provide a fully polynomial algorithm to find a $||\cdot||$-fair allocation in a setup with continuous, strictly monotonic and perfectly complementary utilities.

The basic idea of \cite{DRF} is that a greedy allocation rule approaches dominant resource fairness: at each stage the ``poorest'' agent gets another
$\varepsilon$ of his required bundle.  We observe that the same is true in general for all $||\cdot||$-fairness notions, and show how the pseudo-polynomial algorithm presented in \cite{DRF} may be converted into a fully polynomial-time allocation algorithm.
We will present the algorithm in general terms and assume the following minimal type of oracle access to the underlying utilities and norm.

Our algorithm applies to continuous, strictly monotonic, perfectly complementary utilities with the following additional property:

\begin{definition}\label{def:compatible}
Let $u$ be a perfectly complementary function, and $\vec{x},\vec{y}\in \Re^m_+ $ be two parsimonious bundles with respect to $u$.
We call $u$ \emph{compatible with a norm $||\cdot||$} if $u(\vec{x})>u(\vec{y}) $ implies $||\vec{x}||>||\vec{y}||$.\end{definition}

\begin{lemma}\label{lemma:comp}
Every perfectly complementary utility is compatible with $L_p$ for every $1 \leqslant p< \infty$.
Leontief  and satiable Leontief utilities are also compatible with $L_\infty$.\end{lemma}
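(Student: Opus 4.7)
The plan is to handle the two claims separately, and in both cases to exploit the total order on parsimonious bundles that Proposition \ref{prop:perf} provides.

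For the first claim, I would start by fixing parsimonious bundles $\vec{x}, \vec{y}$ with $u(\vec{x}) > u(\vec{y})$ for a continuous perfectly complementary $u$. By Proposition \ref{prop:perf}(2), parsimonious bundles under $u$ are totally ordered, so either $\vec{x} \leqslant \vec{y}$ or $\vec{y} \leqslant \vec{x}$. Monotonicity of $u$ rules out the first (it would force $u(\vec{x}) \leqslant u(\vec{y})$), so $\vec{y} \leqslant \vec{x}$, and $\vec{x} \ne \vec{y}$ because their utilities differ. Hence there is some coordinate $j$ with $y_j < x_j$ while $y_k \leqslant x_k$ for all other $k$. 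Since $t \mapsto t^p$ is strictly increasing on $[0, \infty)$ for $1 \leqslant p < \infty$, this gives $\sum_k y_k^p < \sum_k x_k^p$, and taking $p$-th roots yields $\|\vec{y}\|_p < \|\vec{x}\|_p$, as required.

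For the second claim, I would use the structural fact (already noted just after the definition of Leontief utilities) that any parsimonious bundle for a Leontief utility is a scalar multiple of the request vector $\vec{r}$: specifically, by Lemma \ref{lemma:parsi}, if $\vec{x}$ is parsimonious then $\vec{x} = w(u(\vec{x})) = u(\vec{x}) \cdot \vec{r}$. Consequently $\|\vec{x}\|_\infty = u(\vec{x}) \cdot \max_j r_j$, and since $\vec{r}$ has at least one positive coordinate (otherwise $u \equiv 0$ and no strict inequality of utilities could occur), $u(\vec{x}) > u(\vec{y})$ immediately gives $\|\vec{x}\|_\infty > \|\vec{y}\|_\infty$. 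For the satiable case $u(\vec{x}) = \min(1, \min_{j \in S}(x_j/r_j))$, the same analysis applies: if $u(\vec{x}) > u(\vec{y})$ then $u(\vec{y}) < 1$, so the parsimonious bundle for $\vec{y}$ is still $u(\vec{y}) \vec{r}$, and the parsimonious bundle for $\vec{x}$ is $\min(u(\vec{x}), 1) \cdot \vec{r}$, which is strictly larger coordinate-wise (and strictly larger in $\|\cdot\|_\infty$) than $u(\vec{y}) \vec{r}$.

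There is no serious obstacle, but the proof should implicitly justify why the statement restricts $L_\infty$ to the Leontief case. The natural reason is that $\|\cdot\|_\infty$ is only weakly monotone with respect to the coordinate-wise order: a strict coordinate-wise increase at a non-maximal coordinate does not change the $L_\infty$ norm, so for general perfectly complementary utilities one can have parsimonious bundles $\vec{y} < \vec{x}$ with $\|\vec{y}\|_\infty = \|\vec{x}\|_\infty$. The Leontief assumption rescues the $L_\infty$ case precisely because all parsimonious bundles lie on the single ray spanned by $\vec{r}$, so the maximum coordinate scales linearly with the utility level.
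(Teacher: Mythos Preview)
Your proposal is correct and follows essentially the same route as the paper: use Proposition~\ref{prop:perf} to deduce $\vec{y}<\vec{x}$ from $u(\vec{x})>u(\vec{y})$, then invoke strict monotonicity of $t\mapsto t^{p}$ for the $L_p$ case and the explicit form $w(t)=t\vec{r}$ of parsimonious Leontief bundles for the $L_\infty$ case. Your treatment of the satiable case is slightly more explicit than the paper's, and your closing paragraph explaining \emph{why} general perfectly complementary utilities need not be $L_\infty$-compatible is a nice addition the paper omits.
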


\begin{proof}
Let $u$ be a perfectly complementary utility function, and let $\vec{x}, \vec{y}$ be two parsimonious bundles such that $u(\vec{x})>u(\vec{y})$. Let $u(\vec{x})\! = \!t$ and $u(\vec{y})\! = \!t'$.
Since $\vec{x},\vec{y}$ are parsimonious, and $u$ is monotonic, it follows that $\vec{x}\ge\vec{y}$ (Proposition \ref{prop:perf}). It cannot be the case that $\vec{x}\! = \!\vec{y}$ since $t>t'$, so $\vec{x}>\vec{y}$.

%Given $p$ such that $1\leqslant p < \infty$, we look at the $L_p$ norm of both bundles. %$ ||\vec{x}||_p\! = \!\sqrt[p] {\sum_j {x_j}^p} $ and $ ||\vec{y}||_p\! = \!\sqrt[p] {\sum_j {y_j}^p} $.

$\vec{x}>\vec{y}$, so $x_j\geqslant y_j$ for all $j$, with at least one inequality strict. Therefore, $\sum_j {x_j}^p > \sum_j {y_j}^p$
and from the monotonicity of $\sqrt[p]{}$, we obtain that $||\vec{x}||_p>||\vec{y}||_p$ as required.

For Leontief utilities, these parsimonious bundles are $t\cdot\vec{r}$ and $t'\cdot\vec{r}$, so in the case of $L_\infty$: $||t\cdot \vec{r}||_\infty \! = \! t \cdot max_j (\vec{r})>t'\cdot max_j (\vec{r}) \! = \!||t'\cdot \vec{r}||_\infty$,  as required.

The above applies trivially to satiable utilities as well, since for bundles at which $u$ is satiated, there will  not exist a bundle that gives a higher utility.  \end{proof}

Lemma \ref{lemma:comp} implies that if a perfectly complementary utility function is compatible with a norm, every $||\cdot||$-fair value corresponds to a single parsimonious allocation.

We need to describe how a utility function $u_i$ and a norm $|| \cdot ||$ are accessed. We assume that is provided by the following type of oracle query:
\begin{alg}Given a fairness level $h$, return the parsimonious bundle $\vec{x}^i$ for $i$ with $||\vec{x}^i||\! = \!h$.\end{alg}

Notice that this type of oracle access is no stronger than direct access to $u_i$ or its corresponding parsimonious representation $w^i$, as given access to $w^i(t)$ one can use binary search over $t$ to find a bundle with $||w^i(t)||\! = \!h$, and given only direct access to $u_i$ one can recover $w^i(t)$
by binary search on every index (i.e., $w^i_1(t)$ is the smallest $\alpha$ such that $u_i(\alpha, 1,...,1)=t$).  We prefer this formalization of access since it does not directly use the cardinal utilities but only their ordinal properties, and so will often be more natural.
For example, for Leontief utilities, where $u_i(\vec{x})\! = \! \min_j (x_j/r_j)$, the reply for query $h$ would be $\vec{x}\! = \!h\cdot \vec{r}/||\vec{r}||$.

We use this oracle in the following basic step of the algorithm:

\begin{alg}
{\bf Allocation Step(S,q)} (For a given $|| \cdot ||$) :

{\bf Input:}

A set S of agents and a vector of remaining quantities $q\! = \!(q_1 ... q_m)$.

{\bf Output:}
Returns the maximum $h$ such that each $i\in S$ can be given a parsimonious $\vec{x}^i$ with $||\vec{x}^i||\! = \!h\cdot e_i$ and
yet stay within the given quantities: for all $j$, $\sum_i x^i_j \leqslant q_j$.  It also returns the parsimonious allocation $\left(X\right)$
with $||\vec{x}^i||\! = \!h$.

{\bf Algorithm:} To find $h$, one may use binary search given the access to the utility functions specified above.

\end{alg}
Note that this algorithm allows us to get arbitrarily close to $h$, but cannot guarantee that $h$ is accurate. This step is therefore, in general, pseudo-polynomial.

For the special case of Leontief utilities where for each agent $i$, $u_i(\vec{x})\! = \!\min_j (x_j/r^i_j)$, the following is a direct solution: $h\! = \!\min_j (q_j \cdot ||\vec{r}^i||/\sum_i {e_i r^i_j})$.

There are two possible reasons that may limit $h$: either one of the agents gets satiated at the parsimonious bundle corresponding to $h$ or (at least) one of the items got exhausted.

\begin{lemma}
If all utilities are perfectly complementary, continuous, strictly monotonic and
compatible with $|| \cdot ||$ then AllocationStep returns an allocation where either for some $i\in S$, $u_i(\vec{x}^i) \geqslant u_i(\vec{y})$ for all $\vec{y}$ or for some $j$ for which $q_j>0$, $\sum_i x^i_j \! = \! q_j$.
\end{lemma}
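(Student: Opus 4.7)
The plan is to argue by contradiction. Suppose the AllocationStep output $(h, X)$ violates both alternatives: no agent $i \in S$ is satiated at $\vec{x}^i$, and for every good $j$ with $q_j > 0$ there is strict slack $\delta_j := q_j - \sum_i x^i_j > 0$. I will perturb $h$ upward, produce a feasible parsimonious allocation at the new level, and contradict the maximality of $h$ in the definition of AllocationStep.

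The technical core is to parametrize each agent's parsimonious curve by its norm value. Strict monotonicity and continuity of $u_i$ give, via Lemma \ref{lemma:w_cont}, continuity of the parsimonious-bundle representation $w^i$ in its utility parameter $t$. Compatibility of $u_i$ with $||\cdot||$ (Definition \ref{def:compatible}) then implies that $t \mapsto ||w^i(t)||$ is continuous and strictly increasing along parsimonious bundles. Non-satiation of $u_i$ at $\vec{x}^i$ guarantees that some $\vec{y}$ achieves $u_i(\vec{y}) > u_i(\vec{x}^i)$, so $t^*_i := u_i(\vec{x}^i)$ is not maximal and the map $t \mapsto ||w^i(t)||$ is defined on a half-open interval extending strictly above $t^*_i$; being continuous and strictly increasing there it admits a continuous inverse, and composing with $w^i$ yields a continuous map $\vec{x}^i(\cdot)$ from norm values to parsimonious bundles, well-defined on a neighborhood of $h \cdot e_i$ that includes points strictly above $h \cdot e_i$.

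With this parametrization the finish is routine: by continuity of each coordinate map $h' \mapsto \vec{x}^i(h' \cdot e_i)_j$ at $h' = h$ and positivity of each $\delta_j$, there exists $\varepsilon > 0$ such that for every $h' \in (h, h+\varepsilon]$ and every $j$ with $q_j > 0$ we still have $\sum_i \vec{x}^i(h' \cdot e_i)_j \leqslant q_j$. Any such $h'$ provides a feasible parsimonious allocation at a strictly larger common fairness level, contradicting the maximality of $h$. The step I expect to require the most care is invoking non-satiation to conclude that the parametrization extends to a genuine one-sided neighborhood \emph{above} the current norm value rather than only to the point itself; a secondary, more cosmetic nuisance is the treatment of goods with $q_j = 0$, most cleanly dispatched by observing that such goods force $x^i_j = 0$ for every agent and thus contribute no non-trivial constraint, so they may simply be removed from consideration at the outset.
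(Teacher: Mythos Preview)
Your proof is correct and takes essentially the same approach as the paper: argue by contradiction and use continuity of the parsimonious-bundle representation (Lemma~\ref{lemma:w_cont}) together with norm-compatibility to exhibit a feasible allocation at a level strictly above $h$. The only cosmetic difference is that the paper perturbs the utility parameter $t_i \mapsto t_i + \varepsilon$ and then invokes compatibility to conclude $||w^i(t_i+\varepsilon)||>h$, whereas you reparametrize directly by norm value; your version is in fact more carefully argued, since the paper simply states the existence of the feasible $\varepsilon$-perturbation as its contradiction hypothesis rather than deriving it from the negated conclusion, and does not discuss the $q_j=0$ case at all.
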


\begin{proof}
Let $t_i \! = \! u_i(\vec{x}^i)$. Assume by way of contradiction that there is some $\varepsilon>0$ small enough such that for all $j$, $\sum_{i\in S} w^i_j(t_i \!+\! \varepsilon)\leqslant q_j$ (note that if $t_i$ is the maximal value of $u_i$, then $w_j^i(t_i \!+\! \varepsilon)\! = \!\infty$ for all $j$).
By Lemma \ref{lemma:w_cont}, it must be the case that $w^i(t_i \!+\! \varepsilon)>w^i(t_i)$. Since $u_i$ is compatible with $||\cdot||$ for all $i$, $||w^i(t_i \!+\! \varepsilon)||>||w^i(t_i)||\! = \!h$ for all $i$, which contradicts $h$'s maximality under the constraints.  \end{proof}
Now we have the following algorithm:

\begin{alg}
\begin{algorithmic}
\STATE Initialize $S$ to be all agents and $q$ to be initial  quantities.
\WHILE {$S\neq \emptyset$}
\STATE $h,\left(X\right) \! = \!$ AllocationStep$(S,q)$.
\STATE $G\leftarrow\{j | \sum_i x^i_j \! = \! q_j\}$ // Exhausted items
\STATE Let $F$ be the set of agents $i$ such that for all parsimonious bundles $\vec{y}$ with $||\vec{y}||>h$ we have $y_j > x^i_j$ for some $j \in G$.
\STATE $q \! = \! q - \sum_{i \in F} \vec{x}^i$
\STATE $S \! = \! S \setminus F$
\ENDWHILE
\end{algorithmic}
\end{alg}

\begin{theorem}This algorithm outputs the $||\cdot||$-fair allocation under entitlements $e_1,...,e_n$ for all continuous, strictly monotonic and perfectly complementary utility functions compatible with $||\cdot||$.\end{theorem}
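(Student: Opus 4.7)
My plan is to argue by contradiction, exploiting the greedy structure of the algorithm to identify a fully-used coordinate on which any hypothetical ``strictly fairer'' $Y$ must over-consume. I first set up notation: let $h_k$, $F_k$, $G_k$, $S_k$, $q^{(k)}$ be the quantities at iteration $k$, let $\vec{x}^{i,(k)}$ denote the trial bundle of $i\in S_k$ returned by AllocationStep (so $||\vec{x}^{i,(k)}||\! = \!h_k e_i$), and let $\vec{x}^i$ be $i$'s final bundle (which equals $\vec{x}^{i,(k_i)}$, where $k_i$ is the iteration at which $i$ was frozen). Two easy observations I would record up front: $(h_k)$ is non-decreasing, since the iteration-$k$ trial restricted to $S_{k+1}$ remains feasible under $q^{(k+1)}$; and by the preceding lemma some agent always joins $F_k$, so the algorithm halts within $n$ rounds with a feasible parsimonious output $X$.

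The heart of the argument is the following bridge, which I would prove first: for any two parsimonious bundles $\vec{y},\vec{z}$ of a single utility $u_i$, the norm-inequality $||\vec{y}||\geqslant ||\vec{z}||$ is equivalent to the coordinate-wise inequality $\vec{y}\geqslant \vec{z}$ (combine compatibility of $u_i$ with $||\cdot||$, Proposition~\ref{prop:perf}, Lemma~\ref{lemma:parsi}, and monotonicity of the norm on $\Re^m_+$). Now assume for contradiction that some feasible parsimonious $Y$ is strictly fairer than $X$, and set $\alpha := \min_{i \in D}||\vec{x}^i/e_i|| < \min_{i \in D}||\vec{y}^i/e_i||$, where $D = \{i: ||\vec{x}^i||\neq ||\vec{y}^i||\}$. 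Pick $i^*\in D$ attaining $\alpha$ and let $k^*$ be the iteration at which $i^*$ is frozen, so $\alpha = h_{k^*}$. Applying the bridge to each $i\in F^*:=F_1\cup\dots\cup F_{k^*}$ yields $\vec{y}^i\geqslant \vec{x}^i$ (non-differing agents satisfy $\vec{y}^i=\vec{x}^i$, while differing ones have $||\vec{y}^i||>h_{k^*}e_i\geqslant ||\vec{x}^i||$), and the defining property of $F_{k^*}$ applied to the parsimonious bundle $\vec{y}^{i^*}$ furnishes a coordinate $j^*\in G_{k^*}$ with $y^{i^*}_{j^*} > x^{i^*}_{j^*}$. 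The same bridge gives $\vec{y}^i \geqslant \vec{x}^{i,(k^*)}$ for each $i\in S_{k^*+1}$, since their $Y$-norm is at least $h_{k^*}e_i = ||\vec{x}^{i,(k^*)}||$.

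To close the argument I sum on $j^*$. Because $j^*\in G_{k^*}$, the iteration-$k^*$ trial saturates $j^*$: $\sum_{i\in S_{k^*}} x^{i,(k^*)}_{j^*} = q^{(k^*)}_{j^*}$; telescoping the recursion $q^{(k+1)} = q^{(k)} - \sum_{i\in F_k}\vec{x}^i$ then produces $\sum_{i\in F^*} \vec{x}^i_{j^*} + \sum_{i\in S_{k^*+1}} x^{i,(k^*)}_{j^*} = 1$. Combining with the coordinate-wise inequalities above and the strict gap at $i^*$,
\begin{align*}
\sum_i y^i_{j^*} &\geqslant \sum_{i\in F^*} y^i_{j^*} + \sum_{i\in S_{k^*+1}} x^{i,(k^*)}_{j^*} \\
&> \sum_{i\in F^*} \vec{x}^i_{j^*} + \sum_{i\in S_{k^*+1}} x^{i,(k^*)}_{j^*} = 1,
\end{align*}
contradicting feasibility of $Y$. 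The main obstacle I anticipate is the careful bookkeeping between trial bundles $\vec{x}^{i,(k^*)}$ and final bundles $\vec{x}^i$ (which only coincide on $F_{k^*}$), together with a minor notational mismatch: the condition ``$||\vec{y}||>h$'' in the definition of $F$ should be read as ``$||\vec{y}||>h\cdot e_i$'' so that its scale agrees with the per-entitlement norms produced by AllocationStep.
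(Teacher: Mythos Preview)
Your argument is correct. The core ideas---the totally ordered structure of parsimonious bundles (your ``bridge''), and the bottleneck coordinate that any improving allocation must overfill---are the same ones the paper relies on, but you organize them differently.

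The paper runs an \emph{induction on iterations}: letting $Y$ be any $||\cdot||$-fair allocation, it maintains the invariants (i) $||\vec{x}^i||\leqslant||\vec{y}^i||$ for $i\in S$ and (ii) $||\vec{x}^i||=||\vec{y}^i||$ for $i\notin S$, concluding $X=Y$ at termination (which also yields uniqueness). The delicate direction, showing $||\vec{y}^i||\leqslant h$ for each newly frozen $i$, is dispatched in one sentence (``AllocationStep chooses the maximal $h$''); unpacking that sentence is precisely your bottleneck-overconsumption argument. You instead give a single \emph{global contradiction}: assuming some $Y$ strictly beats $X$, you locate the first freezing round $k^*$ at which a differing agent attains the minimum, use the bridge to push $Y$ coordinate-wise above both the frozen bundles on $F^*$ and the trial bundles on $S_{k^*+1}$, and sum on the exhausted coordinate $j^*$. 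What your route buys is an explicit, self-contained feasibility contradiction that does not leave the key step implicit; what the paper's inductive route buys is a slightly cleaner statement (it simultaneously proves uniqueness of the fair allocation). Your observation that $F_k\neq\emptyset$ at every round (hence at most $n$ iterations rather than the paper's $n+m$) is correct but deserves one more line: if no agent were in $F_k$, then each $i\in S_k$ admits a parsimonious bundle with strictly larger norm yet unchanged on $G_k$, so by continuity a common $h'>h_k$ is feasible, contradicting maximality. Your closing remark about reading ``$||\vec{y}||>h$'' as ``$||\vec{y}||>h\cdot e_i$'' in the definition of $F$ is the right fix for the entitlement-scaled version.
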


\begin{proof}
First we notice that this algorithm terminates, since at every allocation step either some agent is satiated at the parsimonious allocation corresponding to $h$ (and then leaves $S$)
or some good is exhausted (and enters $G$), and since $S$ never increases and $G$ never decreases, we can have at most $n \!+\! m$ allocation steps.

We now prove correctness, by induction over the number of iterations of the main loop.
Let $Y$ be a parsimonious $||\cdot ||$-fair allocation, and $X$ be the allocation of the current step of the algorithm, and let $X'$ be the allocation of the previous step.
Similarly, let $S'$ be the set of non-exhausted agents in the end of the previous step, and $S$ be the set at the end of the current step.
Note that for all $i$, $\vec{x}^i\geqslant \vec{x}'^i$, since the algorithm never diminishes allocations.
We prove the following invariants by induction. The theorem follows directly.
\begin{enumerate}
\item If $i\in S$ then $||\vec{x}^i||\leqslant ||\vec{y}^i||$.
\item If $i \not\in S$ then $||\vec{x}^i||\! = \!||\vec{y}^i||$.
\end{enumerate}

{\bf Basis: } Right after the initialization phase, $\vec{x}^i\! = \!\vec{0}$ for all $i$ (which means $||\vec{x}^i||\! = \!0$) and $S\! = \![n]$)
so (1) holds trivially and (2) holds vacuously.

{\bf Step: } Assume that the above is true for $X'$. For all $i\not\in S'$ $||\vec{y}^i||\! = \!||\vec{x}'^i||\! = \!||\vec{x}^i||$, from the assumption and the fact that the algorithm
does not change allocations of agents not in $S'$.
Since we assumed $Y$ is $||\cdot||$-fair, for all $i\in S'$, $||\vec{y}^i||\ge||\vec{x}^i||\! = \!h$ since
otherwise $\min_{\left\{i : ||\vec{x}^i|| \ne ||\vec{y}^i||\right\}} ||\vec{x}^i||>\min_{\left\{i : ||\vec{x}^i|| \ne ||\vec{y}^i||\right\}} ||\vec{y}^i||$ and $Y$'s fairness is contradicted. Since $S \subsetneq S'$, (1) is proven.
Now, let's look at $i\in S'\setminus S$. It can't be that $||\vec{y}^i||>h$ since AllocationStep chooses the maximal $h$ for which the allocation is both
parsimonious and valid, so $||\vec{y}^i||\! = \!h$ for each $i\in S'\setminus S$.
We obtain therefore that for all $i\not\in S$, $||\vec{y}^i||\! = \!||\vec{x}^i||$ as required.

Observe that these two invariants imply that at the end of the algorithm, $X \! = \! Y$ for any $||\cdot||$-fair allocation, which not only proves the
correctness of the algorithm, but also the uniqueness of the fair allocation.  \end{proof}

\section{Competitive Equilibria and Bottleneck Based Fairness}
In this section we develop a close connection between the problem of finding a BBF allocation and the problem of finding an equilibrium in a Fisher Market,
a special case of an Arrow-Debreu market that is often more computationally tractable.  As corollaries of this connection and known results,
we get both a general existence result, generalizing that
of \cite{NJC}, and a polynomial-time algorithm for agents with Leontief utilities.

A Fisher Market is a market model where $n$ unrelated agents are interested in $m$ different, infinitely divisible goods,
each available in some quantity $q_j$. Each agent has a preference over bundles, given by a utility function $u_i$, and  a budget (``money'') that he
will use to trade with the other agents towards a better bundle.
The question of the existence and computation of equilibria in this setup has been
well studied in mathematical economics, and more recently in computational economics
(to state just a few references: \cite{ArrowDebreu,EISEN,DKV07}, see \cite{AGT6} for a recent survey).

\subsection{Fisher Market Equilibrium: Definition and Existence}
A market equilibrium is a state where all agents are no longer interested in trading with their peers.
In a Fisher Market, an equilibrium state is defined by the following two conditions (see \cite{AGT6}).
\begin{definition}
In a market setting with $n$ agents interested in $m$ infinitely divisible goods, where agents' preferences over bundles are given by a utility function $u_i$, and each agent has a budget $e_i>0$, an equilibrium is defined as a pair of price vector $\pi\in \Re^m_+ $ and an allocation $X\in \Re^{n\times m}_+ $ with the following two properties:
\begin{enumerate}
\item The vector $\vec{x}^i $ maximizes $u_i(\vec{x}^i)$ under the constraints $\sum_j \pi_j x^i_j \leqslant e_i$ and $\vec{x}^i\in \Re^m_+ $.
\item For each good $j\in [m]$, $\sum_i x^i_j\! = \!q_j$
\end{enumerate}
\end{definition}

It is imperative to understand that the two conditions are in some sense independent. When an agent calculates the bundle satisfying condition 1, he
{\bf does not} take into account the availability of goods in the market. If agents demand more than the available goods, the prices must go up - until an equilibrium is reached.

Fisher Market is a special case of the market studied by Arrow and Debreu, where all agents arrive to the market with the same proportions of all goods (\cite{AGT6}).
Thus, as a corollary of the Arrow and Debreu Theorem (\cite{ArrowDebreu}), given a Fisher market as described above, an equilibrium exists if the following conditions hold:
\begin{itemize}
\item $u_i$ is continuous
\item $u_i$ is strictly quasi-concave (Definition \ref{def:sqc}).
\item $u_i$ is non-satiable:\footnote{Note that if $u$ is strictly monotonic (Definition \ref{def:lns}) and there is no $\vec{x}$ in which $u$ is satiated, then it is non-satiable.}
For all $\vec{x}\in \Re^m_+ $, there is an $\vec{x}'\in \Re^m_+ $ such that $u_i(\vec{x}')>u_i(\vec{x})$
\end{itemize}

From Lemma \ref{lemma:pclns_sqc}, every perfectly complementary and strictly monotonic utility function is strictly quasi-concave, so we obtain the following corollary:
\begin{corollary} \label{exist_ce}
For a setup with continuous, strictly monotonic, non-satiable and perfectly complementary utility functions $u_i$, there exists a Fisher market equilibrium.
\end{corollary}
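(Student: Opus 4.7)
The plan is to verify, one by one, the three sufficient conditions for existence in a Fisher market that were stated in the bulleted list immediately preceding the corollary, and then simply invoke that existence result (which is in turn a consequence of the Arrow-Debreu theorem). Two of the three conditions are explicitly present in the hypothesis of the corollary, so the only real work is to supply the third.

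First I would observe that continuity of each $u_i$ is assumed directly, and non-satiability is also assumed directly (in the stronger form given in Definition \ref{def:lns}; the footnote to the bulleted list already notes that this matches the Arrow-Debreu sense of non-satiability). So those two conditions require nothing beyond citing the hypothesis.

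The only substantive step is verifying strict quasi-concavity. This is exactly the content of Lemma \ref{lemma:pclns_sqc}, which states that every perfectly complementary and strictly monotonic utility is strictly quasi-concave. Since the corollary assumes both perfect complementarity and strict monotonicity for each $u_i$, Lemma \ref{lemma:pclns_sqc} applies and strict quasi-concavity follows immediately.

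With all three bulleted hypotheses verified, I would conclude by invoking the Arrow-Debreu existence theorem, specialized to Fisher markets as described in the paragraph above (noting that a Fisher market is the Arrow-Debreu setting in which all agents hold equal proportions of all goods). This yields the desired price vector $\pi$ and allocation $X$ satisfying the two equilibrium conditions. There is no genuine obstacle here: the corollary is really a bookkeeping statement that our class of utilities fits inside the class for which existence is already known, with Lemma \ref{lemma:pclns_sqc} doing the one nontrivial piece of translation.
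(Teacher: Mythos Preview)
Your proposal is correct and matches the paper's own argument essentially line for line: the paper derives the corollary directly from the Arrow--Debreu existence result by noting that continuity and non-satiability are assumed, and that strict quasi-concavity follows from Lemma~\ref{lemma:pclns_sqc} applied to the perfectly complementary and strictly monotonic $u_i$. There is nothing to add.
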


\subsection{The Non-Satiable Case}
The main point of this subsection is that market equilibrium for continuous perfectly complementary and {\em non-satiable} utilities is BBF.  The next subsection will observe that the non-satiability is not really required for strictly monotonic utilities.

 \begin{theorem} \label{thm:fishernjc}Let $u_1,...,u_n$ be perfectly complementary, continuous, strictly monotonic and non-satiable utility functions.
Let $e_1,...,e_n$ be the agents' budgets, let $(\pi,X)$ be a Fisher Market equilibrium in a market where $q_j=1$ for all $j$, and let $Y\leqslant X$ be the parsimonious allocation such that for all $i$,  $u_i(\vec{y}^i)\! = \!u_i(\vec{x}^i)$.  Then $Y$ is BBF.
\end{theorem}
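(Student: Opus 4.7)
The plan is to leverage the two equilibrium conditions to verify the bottleneck clause of BBF. Parsimony of $Y$ is free by construction, and since the utilities are non-satiable the second BBF option is vacuous, so it suffices to show: for each agent $i$ there exists a good $j$ with $\sum_k y^k_j = 1$ and $y^i_j \geq e_i$. The natural target is a good with $\pi_j > 0$—positively priced goods should both be fully consumed and give each agent at least their ``budget share.''

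\textbf{Step 1: Total price equals total budget.} First I would argue that at equilibrium each agent spends exactly $e_i$ on $\vec x^i$. By non-satiability and strict quasi-concavity (Lemma \ref{lemma:pclns_sqc}), a utility maximizer cannot have slack budget: a small convex combination toward a strictly better bundle would still lie in the budget set and strictly raise utility. Summing $\sum_j \pi_j x^i_j = e_i$ over $i$ and using market clearing $\sum_i x^i_j = 1$ gives $\sum_j \pi_j = \sum_i e_i = 1$.

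\textbf{Step 2: Parsimonization preserves spending.} The key step is to show $\sum_j \pi_j y^i_j = e_i$ even though $\vec y^i \leq \vec x^i$. Since prices are non-negative we immediately get $\sum_j \pi_j y^i_j \leq e_i$. If the inequality were strict, pick any $\vec z$ with $u_i(\vec z) > u_i(\vec y^i) = u_i(\vec x^i)$ (which exists by non-satiability); then for small enough $\varepsilon>0$ the point $\vec y^i + \varepsilon(\vec z - \vec y^i)$ lies in $\Re^m_+$ and strictly within the budget, while strict quasi-concavity makes its utility strictly greater than $u_i(\vec x^i)$, contradicting that $\vec x^i$ maximizes $u_i$ under the budget. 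I expect this to be the main obstacle, since it is exactly where the interplay of non-satiability, strict quasi-concavity, and the ``refund'' of mass from $\vec x^i$ to $\vec y^i$ has to be handled carefully.

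\textbf{Step 3: Positively priced goods are fully consumed in $Y$.} Summing Step 2 over $i$ yields $\sum_j \pi_j \sum_i y^i_j = 1 = \sum_j \pi_j$. Combined with $\sum_i y^i_j \leq \sum_i x^i_j = 1$ for every $j$, this forces $\sum_i y^i_j = 1$ for every $j$ with $\pi_j > 0$.

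\textbf{Step 4: Bottleneck for each agent.} Fix $i$ and rewrite Step 2 as $\sum_j \pi_j (y^i_j - e_i) = 0$, using $\sum_j \pi_j = 1$. Since $\pi \geq 0$, there must exist some $j$ with $\pi_j > 0$ and $y^i_j \geq e_i$ (otherwise every term is $\leq 0$ and some is $< 0$). By Step 3 this good satisfies $\sum_k y^k_j = 1$, so it serves as a bottleneck witnessing the BBF property for agent $i$. Repeating for each $i$ proves that $Y$ is BBF.
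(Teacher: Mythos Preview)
Your proof is correct and follows essentially the same route as the paper: establish budget exhaustion, deduce $\sum_j \pi_j = 1$, show that the parsimonious bundle still costs the full budget on positively-priced goods, and then use a weighted-average argument to exhibit a positively-priced bottleneck $j$ with $y^i_j \geq e_i$. The only minor variation is in Step~2: the paper proves the stronger pointwise fact $y^i_j = x^i_j$ for every $j$ with $\pi_j>0$ (by swapping $x^i_j$ down to $y^i_j$ and spending the savings on a bit of every good), whereas you obtain the aggregate identity $\sum_j \pi_j y^i_j = e_i$ directly via strict quasi-concavity and recover full consumption of positively-priced goods by summing over agents.
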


\begin{proof}
Let $F\! = \!\{j|\pi_j\! = \!0\}$ be the ``free'' goods.  First notice that since the $u_i$s are strictly monotonic, continuous and non-satiable, we must have that all agents exhaust their budget, that is $\sum_j \pi_j x^i_j \! = \! e_i$ for all $i$, as otherwise the remaining money could then be used to buy a little bit extra of each good, thereby increasing $i$'s utility, contradicting the requirement that $\vec{x}^i$ maximizes $i$'s utility within budget.
As $\sum_i x^i_j \! = \! q_j\! = \!1$, it follows that $\sum_j \pi_j\! = \!\sum_j \sum_i x^i_j \pi_j\! = \!\sum_i \sum_j x^i_j \pi_j\! = \!\sum_i e_i\! = \!1$.
Now notice that for $j \not\in F$ we must
have $x^i_j\! = \!y^i_j$ for all $i$, since otherwise agent $i$ could demand only $y^i_j$ of
good $j$ saving $(x^i_j-y^i_j)\pi_j$ money, which could then be used to buy a little bit extra of each good.
Thus, $e_i \! = \! \sum_j \pi_j x^i_j \! = \! \sum_{j \not\in F} \pi_j x^i_j \! = \! \sum_{j \not\in F} \pi_j y^i_j$.  But since
$\sum_{j \not\in F} \pi_j \! = \! \sum_j \pi_j \! = \! 1$ we have a convex combination of $y^i_j$ over all $j \not\in F$ that sums to $e_i$, and it follows
that for some $j \not\in F$, $y^i_j \geqslant e_i$.  The proof is concluded by noting that all $j \not\in F$ are bottleneck resources in the allocation $Y$, since for all $j\notin F$, $1\! = \!\sum_i x^i_j \! = \! \sum_i y^i_j$. \end{proof}

\subsection{The Satiable Case}
Theorem \ref{thm:fishernjc} applies to almost all continuous, strictly monotonic and perfectly complementary utilities. The crux lies in the non-satiability assumption.
In this subsection, we show that this assumption is, in fact, without loss of generality.
Starting with a satiable, strictly monotonic, continuous and perfectly complementary function, we can convert it to a similarly characterized non-satiable function by adding a resource that only the agent whose utility is satiable is interested in.

Given a parsimonious bundle representation $w^i$ of a strictly monotonic, perfectly complementary and satiable utility function $u_i$,
and assuming, w.l.o.g, that the maximal value of $u_i$ is $1$ (otherwise, we normalize the function, without affecting any of its other properties), we define the parsimonious representation $w'^i$ of a continuous, perfectly complementary, strictly monotonic and non-satiable utility as follows:

\[
w'^i(t) \! = \! \left\{ \begin{array} {c c c}
\left(w^i_1(t),...,w^i_m(t),t\right)& & t \leqslant 1\\
t\cdot\left(w^i_1(1),...,w^i_m(1),1\right)& & t > 1\\
\end{array}
\right.
\]
Let $u'_i$ be the perfectly complementary utility function defined by $w'^i$.

\begin{proposition} If $u_i$ is continuous, strictly monotonic, and perfectly complementary, then $u'_i$ is continuous, strictly monotonic, perfectly complementary and non-satiable.
Moreover, if $u_i$ is satiable Leontief, then $u'_i$ is Leontief.
\end{proposition}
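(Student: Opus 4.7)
The plan is to verify each listed property of $u'_i$ by working directly with the parsimonious bundle representation $w'^i$, using the equivalences from Proposition \ref{prop:perf} and Lemma \ref{lemma:w_cont}. First, I would check that $w'^i$ is a legitimate parsimonious bundle representation, i.e.\ that it is non-decreasing in $t$ on each coordinate. On $[0,1]$ this follows since $w^i$ itself is non-decreasing (and the new coordinate $t$ is trivially increasing in $t$); on $(1,\infty)$ it follows since we scale the constant vector $(w^i_1(1),\ldots,w^i_m(1),1)$ by $t$; continuity across $t=1$ is immediate since both pieces agree at $t=1$, giving the value $(w^i_1(1),\ldots,w^i_m(1),1)$. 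Thus $w'^i$ is continuous and non-decreasing on all of $\Re_+$. This lets me define $u'_i$ via the characterization in item (3) of Proposition \ref{prop:perf}: $u'_i(\vec{x}) \geqslant t$ iff $\vec{x} \geqslant w'^i(t)$, which by Proposition \ref{prop:perf} immediately makes $u'_i$ continuous and perfectly complementary.

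Next I would read off strict monotonicity from Lemma \ref{lemma:w_cont}: since $w'^i$ is continuous on the whole domain $\{t \geqslant 0\}$, and since $u'_i$ has no maximal bundle (see below), $u'_i$ is strictly monotonic. For non-satiability, I would observe that for any $t\geqslant 0$ the value $w'^i(t \!+\! 1)$ has finite entries (the first $m$ coordinates are bounded by the finite vector $(t\!+\!1)\cdot w^i(1)$, and the last coordinate is $t\!+\!1$), so there exists $\vec{x}\in\Re^{m+1}_+$ with $u'_i(\vec{x})\geqslant t\!+\!1 > t$. Hence $u'_i$ has no maximal bundle and, combined with strict monotonicity, is non-satiable in the sense of Definition \ref{def:lns}.

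For the satiable Leontief case, recall $u_i(\vec{x}) \! = \! \min(1,\min_{j\in S}(x_j/r_j))$, so its parsimonious bundle representation is $w^i(t)\! = \!t\cdot\vec{r}$ for $t\leqslant 1$. Substituting into the definition of $w'^i$ gives $w'^i(t)\! = \!t\cdot(r_1,\ldots,r_m,1)$ for $t \leqslant 1$, and by construction the same formula $t\cdot(r_1,\ldots,r_m,1)$ holds for $t>1$. So $w'^i(t)\! = \!t\cdot\vec{r}'$ globally, where $\vec{r}'\! = \!(r_1,\ldots,r_m,1)$, which is exactly the parsimonious bundle representation of the (non-satiable) Leontief utility $u'_i(\vec{x})\! = \!\min_{j\,:\,r'_j>0}(x_j/r'_j)$.

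I expect no genuinely hard step — the main thing to be careful about is gluing the two pieces of the definition of $w'^i$ cleanly at $t=1$ so that continuity (and hence strict monotonicity via Lemma \ref{lemma:w_cont}) goes through, and noting that the added coordinate makes the new resource strictly scarce in the limit, which is what kills satiability.
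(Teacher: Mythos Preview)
Your plan follows the same outline as the paper: verify each property of $u'_i$ from the piecewise definition of $w'^i$, check that the two pieces agree at $t=1$, and then specialize to the Leontief case. The Leontief argument and the non-satiability argument are fine and match the paper.

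There is one genuine slip. You write that defining $u'_i$ via item (3) of Proposition~\ref{prop:perf} ``immediately makes $u'_i$ continuous and perfectly complementary.'' Proposition~\ref{prop:perf} does \emph{not} give you continuity: it assumes $u$ is continuous and then proves the three characterizations are equivalent. Likewise, Lemma~\ref{lemma:w_cont} (which you invoke for strict monotonicity) also has continuity of $u$ as a hypothesis, so you cannot use it until continuity of $u'_i$ is established independently. Perfect complementarity \emph{does} follow directly from the definition via $w'^i$ (the proof of $3\Rightarrow 1$ never uses continuity), but continuity of $u'_i$ needs its own argument. The paper supplies exactly that missing step: it observes that for $t<1$ the function $u'_i$ is essentially $u_i$ (continuous and strictly monotonic by assumption) with one extra coordinate, for $t>1$ it is a Leontief utility (continuous and strictly monotonic), and at $t=1$ the two pieces match. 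You can close the gap the same way, or equivalently note that $u'_i(\vec{x},x_{m+1})$ can be written as a minimum of continuous functions (coming from $u_i$, from the Leontief part, and from $x_{m+1}$), hence is continuous; once that is in hand, your appeal to Lemma~\ref{lemma:w_cont} for strict monotonicity goes through.
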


\begin{proof}
Since we defined $u'_i$ using the parsimonious bundle representation $w^i$, it is perfectly complementary by definition. It's also trivial to see that $u$ is non-satiable.
It remains to show it is continuous and strictly monotonic. Both are true for $t < 1$ as  wit$u'_i$ coincides with $u_i$ there;
both are true for $t>1$ as $u'_i$ is just a Leontief utility in that range; and at $t\! = \!1$ we get the same value from the two parts, thereby maintaining
both continuity and consistency.

If $u_i$ is satiable Leontief, then $u'_i$ could be otherwise written as:
$ u''_i \! = \! \min_{j\in [m \!+\! 1]}(x^i_j/r^i_j)$, where $r^i_{m \!+\! 1}\! = \!1$, and the remaining constants are unchanged.
This is clearly a standard Leontief function, and it is trivially equivalent to the function constructed above.  \end{proof}

\begin{lemma}\label{lem:satiable_njc}
 Let $u'_i$ be the extension of the satiable perfectly complementary utility $u_i$ for all $i$.  Then if an allocation is BBF with respect to the $u'_i$s it is also BBF with respect to the $u_i$s.
\end{lemma}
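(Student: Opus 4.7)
The plan is to transfer the BBF certificate from the given allocation $X'$ in the extended market (which has $n$ private resources $m+1,\ldots,m+n$, each of quantity $1$, with only agent $i$ valuing resource $m+i$) down to its restriction $X$ to the original $m$ goods. The guiding idea is that the private coordinate $x'^i_{m+i}$ tracks how close agent $i$ is to saturating the original utility $u_i$, and the market constraint on that coordinate keeps us inside the ``original'' branch of the piecewise parsimonious representation $w'^i$.

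First I would fix an agent $i$ and set $t_i := u'_i(\vec{x}'^i)$. Since BBF requires parsimony, Lemma~\ref{lemma:parsi} gives $\vec{x}'^i = w'^i(t_i)$. Because only agent $i$ desires good $m+i$, parsimony of the other agents forces them to take none of it, so $x'^i_{m+i} \leqslant q_{m+i} = 1$. Reading off the piecewise definition of $w'^i$ rules out the branch $t_i > 1$, hence $t_i \leqslant 1$ and the restriction $\vec{x}^i$ to the first $m$ coordinates equals $w^i(t_i)$. In particular $\vec{x}^i$ is parsimonious for $u_i$ with $u_i(\vec{x}^i) = t_i$, giving condition (1) of BBF.

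Next I would split on $t_i$. If $t_i = 1$, then $u_i(\vec{x}^i)$ equals the normalized maximum value of $u_i$, so $u_i$ is satiated at $\vec{x}^i$ and condition (2) holds via the satiation clause. If $t_i < 1$, then $x'^i_{m+i} = t_i < 1$, so good $m+i$ is not fully consumed in $X'$ and cannot be a bottleneck; since $u'_i$ is non-satiable, the BBF certificate for $X'$ must furnish some bottleneck $j^\star$ with $x'^i_{j^\star} \geqslant e_i$, and the previous sentence forces $j^\star \in [m]$. Since the restriction leaves the first $m$ coordinates untouched, $j^\star$ remains a bottleneck in $X$ and $x^i_{j^\star} \geqslant e_i$, giving condition (2) via the bottleneck clause.

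The whole argument is essentially bookkeeping; the only step requiring care is verifying $t_i \leqslant 1$ so that the formula for $w'^i$ collapses to the ``original'' branch. Once that is secured, the case $t_i = 1$ aligns precisely with the satiation clause for $u_i$ and the case $t_i < 1$ forces the bottleneck to lie among the original goods.
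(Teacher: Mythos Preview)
Your argument is essentially the paper's: it too splits on whether agent $i$'s private virtual good is a bottleneck (equivalently, whether $t_i=1$), invoking the satiation clause in the first case and locating the BBF-witnessing bottleneck among the original $m$ goods in the second. There is one small gap to close: in the case $t_i<1$, your ``previous sentence'' only shows $j^\star \neq m+i$, not $j^\star \in [m]$; you still need to rule out $j^\star = m+k$ for $k\neq i$, which follows immediately since parsimony of $\vec{x}'^i$ gives $x'^i_{m+k}=0<e_i$.
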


\begin{proof}
Given an allocation $X=(\vec{x}^1,...,\vec{x}^n)$ that is an equilibrium with respect to $(u'_1,...,u'_n)$, we need to show that the conditions of BBF hold with respect to $(u_1,...,u_n)$.
We know that for each agent $i$, there exists some $j\in J$ such that $x^i_j\geqslant e_i$, where $J\! = \!\left\{j | \sum_i x^i_j\! = \!1\right\}$ (since $u'_i$ has no maximum, the other option is irrelevant).

To show that with respect to the original utilities and setup the very same allocation is BBF, we have to discard all ``virtual resources'' added when extending the utilities.

Let $J\! = \!\{j | \sum_i x^i_j \! = \! 1\}$ be the set of bottleneck resources in the given allocation. Let $J' \subseteq J$ be the set of bottleneck resources after the virtual ones have been discarded.
Given an agent $i$, if $J$ does not include the ``virtual resource?€™?€™ added for $i$, then after the discard there still  exists $j\in J'$ such that $x^i_j\geqslant e_i$ as required.
If $J$ does include $i$'s ``virtual resource?€™?€™, then it is a bottleneck. This means that agent $i$ got everything available of that resource and therefore, by construction of the utility function $u'_i$, $x^i_j \! = \! w^i_j (\max (u_i(\vec{x})))$ for all $j\in J$ (meaning, agent $i$ got all he asked for).  \end{proof}

\subsection{The Bottom Line}
Looking at what we have constructed so far, we now have the tools to generalize \cite{NJC}.

\begin{theorem}
In every setup with perfectly complementary, continuous, and strictly monotonic utility functions $u_1,...,u_n$ there exists an allocation that satisfies ``No Justified Complaints''.
\end{theorem}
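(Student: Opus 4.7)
The plan is to reduce the general case to the non-satiable case that was already handled by Theorem \ref{thm:fishernjc}, using the extension construction built in the previous subsection. For every satiable $u_i$ in our setup, I would replace it by its extended counterpart $u'_i$ defined via $w'^i$, adding one ``virtual'' resource per satiable agent that only that agent is interested in. The preceding proposition guarantees that each resulting $u'_i$ is continuous, strictly monotonic, perfectly complementary, \emph{and} non-satiable, so the modified economy meets all the hypotheses needed to invoke the existence results for Fisher markets.

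Next I would take the agent entitlements $e_1,\dots,e_n$ as budgets, set $q_j=1$ for every good (including the virtual ones), and apply Corollary \ref{exist_ce} to obtain a Fisher market equilibrium $(\pi,X)$ for the extended economy. Since the extended utilities $u'_i$ are precisely the kind covered by Theorem \ref{thm:fishernjc}, the parsimonious allocation $Y\leqslant X$ with $u'_i(\vec{y}^i)=u'_i(\vec{x}^i)$ is BBF with respect to $(u'_1,\dots,u'_n)$.

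To finish, I would invoke Lemma \ref{lem:satiable_njc}: the same allocation $Y$, once restricted to the original $m$ resources (i.e., after discarding the virtual coordinates), is BBF with respect to the original $(u_1,\dots,u_n)$. This yields the desired BBF allocation in the original setup.

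The only subtlety I would flag is purely bookkeeping: one needs to verify that the restriction to the original coordinates is still a valid parsimonious allocation for the original utilities and that the bottleneck identified in the extended market is either a genuine original bottleneck, or, if it happens to be the virtual resource added for agent $i$, then $i$'s bundle must already give maximal utility on the original coordinates (hence the second clause of BBF is satisfied). Both points follow directly from the construction of $w'^i$ and are exactly what Lemma \ref{lem:satiable_njc} already establishes, so no further work is required.
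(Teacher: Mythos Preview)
Your proposal is correct and follows essentially the same route as the paper: extend any satiable $u_i$ to a non-satiable $u'_i$, invoke Corollary~\ref{exist_ce} for existence of a Fisher equilibrium, apply Theorem~\ref{thm:fishernjc} to get BBF for the $u'_i$s, and then descend via Lemma~\ref{lem:satiable_njc}. The bookkeeping subtlety you flag is exactly what Lemma~\ref{lem:satiable_njc} is designed to handle, so nothing further is needed.
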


\begin{proof}
If needed, we extend the $u_i$s to be non-satiable as shown in the previous section.  Then take a competitive equilibrium of the $u'_i$s, guaranteed to exist by Corollary \ref{exist_ce}, and then by Theorem \ref{thm:fishernjc} the parsimonious bundles derived from that allocation are BBF for the $u'_i$s, and thus by Lemma \ref{lem:satiable_njc} also for the $u_i$s. \end{proof}

\begin{theorem}
There exists a polynomial-time algorithm that computes a \emph{``Bottleneck-Based Fair} allocation for every setup with Leontief or satiable Leontief utilities.
\end{theorem}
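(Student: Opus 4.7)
The plan is to reduce directly to computing a Fisher market equilibrium for Leontief utilities, invoking the machinery assembled in the preceding subsections. Given the input utilities $u_1,\ldots,u_n$ (Leontief or satiable Leontief) and entitlements $e_1,\ldots,e_n$, I would first, if any $u_i$ is satiable, replace it by its non-satiable extension $u'_i$ as constructed in Section 5.3. The preceding proposition guarantees that each $u'_i$ is again a Leontief utility (over $m+1$ resources, with the extra coordinate having $r^i_{m+1}=1$ and an extra good of supply $1$ available only to agent $i$), and the overall extended instance still consists entirely of Leontief utilities.

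Next I would interpret the entitlements $e_i$ as budgets in a Fisher market with goods available in quantity $q_j=1$ (including the virtual goods), and invoke the known polynomial-time algorithm for computing a Fisher market equilibrium under Leontief utilities via convex programming (the Eisenberg--Gale program, as in \cite{EISEN,CV04}). This produces in polynomial time a pair $(\pi,X)$ satisfying the equilibrium conditions. From $X$ I would form the parsimonious allocation $Y\leqslant X$ coordinate-wise guaranteed by Proposition \ref{prop:cont_pars} (for Leontief utilities this is trivial: replace $\vec{x}^i$ by $u_i(\vec{x}^i)\cdot \vec{r}^i$, computable in $O(m)$ time per agent).

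By Theorem \ref{thm:fishernjc}, applied to the (non-satiable, continuous, strictly monotonic, perfectly complementary) extended utilities $u'_i$, the allocation $Y$ is BBF with respect to the $u'_i$s. By Lemma \ref{lem:satiable_njc}, after discarding the virtual coordinates $Y$ is also BBF with respect to the original utilities $u_i$. This gives a BBF allocation for the original instance.

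The only nontrivial ingredient is the polynomial-time computation of a Fisher market equilibrium for Leontief utilities; this is the main obstacle in the sense that it is the single external component whose polynomial-time solvability we must import, but it is precisely handled by \cite{CV04} (with \cite{EISEN} providing the underlying convex program). All the other steps---the extension to non-satiable utilities, forming the parsimonious bundle, and stripping the virtual coordinates---are elementary linear-time manipulations, so the overall algorithm runs in time polynomial in $n$ and $m$, as claimed.
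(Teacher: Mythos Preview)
Your proposal is correct and follows essentially the same route as the paper: extend any satiable Leontief utilities to non-satiable Leontief utilities via the construction of Section~5.3, compute a Fisher market equilibrium in polynomial time using the convex-programming algorithm of \cite{CV04}, pass to the parsimonious allocation, and then invoke Theorem~\ref{thm:fishernjc} together with Lemma~\ref{lem:satiable_njc}. The only minor imprecision is the phrase ``over $m+1$ resources'': since each satiable agent receives its own virtual good, the extended instance has $m+k$ resources when $k$ agents are satiable (with $r^i_j=0$ on the other agents' virtual goods), but this does not affect the argument.
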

\begin{proof}

Codenotti and Varadarajan \cite{CV04} introduces an algorithm based on convex programming that finds a competitive equilibrium in a
Fisher Market for Leontief utilities.  In fact, the solution to their convex program provides the relevant parsimonious bundles. For the case of satiable Leontief utilities, the process of converting them to Leontief utilities described in the previous section is algorithmically trivial. Thus, we obtain a polynomial-time algorithm that finds a BBF allocation in every setup with Leontief or satiable Leontief utilities.  \end{proof}

\section{Discussion}
As computer systems become larger and more pervasive, servicing more and more agents, the question of fair resource allocation becomes more relevant. The classic requirements of an operating system's scheduler will no longer do.

We have studied two notions of fairness: DRF \cite{DRF} and BBF \cite{NJC}. The fact that both used the same family of utility function seems to indicate something about the nature of utility functions in such systems. Therefore, we have extended it to a larger family of utility functions - which may apply to other scenarios as well. The generalization of DRF to GRF stresses how vague the concept of fairness is, as it requires a few independent decisions, none of which comes with a strict sense of what is right and what is wrong.
Establishing the connection between the question of BBF and Fisher Market equilibrium shows that answers can sometimes be found in unexpected places. It also implies that any solution to the market equilibrium problem for continuous, strictly monotonic and perfectly complementary utilities automatically produces a BBF allocation (for example, \cite{DKV07}).
An important issue for further research is the question of incentive. For example, \cite{DRF} have shown that their algorithm for DRF is strategy proof and envy-free. In addition, the polynomial-time algorithm suggested here for BBF may allow us to prove or disprove the existence of similar properties for this fairness notion.

\bibliographystyle{plain}%2
\bibliography{fishernjc}

\begin{thebibliography}{10}

\bibitem{ArrowDebreu}
K.~J. Arrow and G.~Debreu.
\newblock The existence of an equilibrium for a competitive economy.
\newblock {\em Econometrica}, XXII:265--90, 1954.

\bibitem{datanet}
D.P. Bertsekas and R.G. Gallager.
\newblock {\em Data networks:}.
\newblock Prentice-Hall, 1987.

\bibitem{CV04}
Bruno Codenotti and Kasturi~R. Varadarajan.
\newblock Efficient computation of equilibrium prices for markets with leontief
  utilities.
\newblock In {\em ICALP}, pages 371--382, 2004.

\bibitem{AGT6}
Bruno Codenotti and Kasturi~R. Varadarajan.
\newblock Computation of market equilibria by convex programming.
\newblock In Noam Nisan, Tim Roughgarden, Eva Tardos, and Vijay~V. Vazirani,
  editors, {\em Algorithmic Game Theory}. Cambridge University Press, 2007.

\bibitem{thomasCover}
Thomas~M. Cover and Joy~A. Thomas.
\newblock {\em Elements of information theory}.
\newblock Wiley-Interscience, New York, NY, USA, 1991.

\bibitem{NJC}
Danny Dolev, Dror~G. Feitelson, Joseph~Y. Halpern, Raz Kupferman, and Nati
  Linial.
\newblock No justified complaints: On fair sharing of multiple resources.
\newblock In {\em ITCS}, 2012.

\bibitem{EISEN}
E.~Eisenberg.
\newblock Aggregation of utility functions.
\newblock {\em Management Sciences}, 7(4):337--350, 1961.

\bibitem{DKV07}
Dinesh Garg, Kamal Jain, Kunal Talwar, and Vijay~V. Vazirani.
\newblock A primal-dual algorithm for computing fisher equilibrium in the
  absence of gross substitutability property.
\newblock {\em Theor. Comput. Sci.}, 378:143--152, June 2007.

\bibitem{DRF}
Ali Ghodsi, Matei Zaharia, Benjamin Hindman, Andy Konwinski, Scott Shenker, and
  Ion Stoica.
\newblock Dominant resource fairness: fair allocation of multiple resource
  types.
\newblock In {\em Proceeding NSDI}, 2011.

\bibitem{interpersonal}
Peter~J. Hammond.
\newblock Interpersonal comparisons of utility: Why and how they are and should
  be made.
\newblock In {\em Interpersonal Comparisons of Well-Being}, pages 200--254.
  University Press, 1991.

\bibitem{egalitarian}
Jin Li and Jingyi Xue.
\newblock Egalitarian division under leontief preferences.

\bibitem{rawls}
John Rawls.
\newblock {\em A Theory of Justice}.
\newblock Cambridge, Massachusetts: Belknap Press of Harvard University Press,
  1971.

\end{thebibliography}
\end{document}